\newcommand{\C}{\mathbb C}
\newcommand{\R}{\mathbb R}
\newcommand{\K}{\mathbb K}
\renewcommand{\H}{\mathbb H}
\renewcommand{\S}{\mathbb S}
\newcommand{\ket}[1]{\left|#1\right\rangle}
\newcommand{\bra}[1]{\left\langle#1\right|}
\newcommand{\proj}[1]{\ket{#1}\!\bra{#1}}
\newcommand{\Tr}{\operatorname{Tr}}
\newtheorem{theorem}{Theorem}[section]
\newtheorem{lemma}[theorem]{Lemma}
\newtheorem{proposition}[theorem]{Proposition}
\newtheorem{corollary}[theorem]{Corollary}
\newtheorem{definition}[theorem]{Definition}
\newtheorem{remark}[theorem]{Remark}
\newtheorem{example}[theorem]{Example}
\newtheorem{cnst}[theorem]{Construction}
\DeclareMathOperator{\supp}{Supp}
\DeclareMathOperator{\GL}{GL}
\title{Pure-State Quantum Tomography with Minimal Rank-One POVMs}
\author{Dan Edidin, Ivan Gonzalez, and Itzhak Tamo}
\begin{document}

\begin{abstract}
Quantum state tomography seeks to reconstruct an unknown state from measurement statistics. A finite measurement (POVM) is \emph{pure-state informationally complete} (PSI-Complete) if the outcome probabilities determine any pure state up to a global phase. We study \emph{rank-one} POVMs that are minimally sufficient for this task. We call such a POVM \emph{vital} if it is PSI-Complete but every proper subcollection is not PSI-Complete.

We prove sharp upper bounds on the size of vital rank-one POVMs in dimension \(n\): the size is at most \(\binom{n+1}{2}\) over \(\mathbb{R}\) and at most \(n^{2}\) over \(\mathbb{C}\), and we give constructions that attain these bounds. In the real case, we further exhibit a connection to block designs: whenever \(w \mid n(n-1)\), an \((n,w,w-1)\) design produces a vital rank-one POVM with \(n + n(n-1)/w\) outcomes. We provide explicit constructions for \(w=2,n-1\), and \(n\).
\end{abstract}
\maketitle

\section{Introduction}

The reconstruction of an unknown quantum state from experimental data, known as Quantum State Tomography (QST), is a fundamental problem in quantum information science. In a finite-dimensional Hilbert space $\K^n$ (where $\K=\R$ or $\C$), a measurement with $m$ outcomes is described by a Positive Operator-Valued Measure (POVM), which is a set of positive semidefinite operators $\{E_i\}_{i=1}^m$ such that $\sum_{i=1}^m E_i = I$. When the system is in a pure state $\rho=\lvert\psi\rangle\langle\psi\rvert$, the Born rule dictates that the probability of the $i$-th outcome is $p_i(\psi)=\operatorname{Tr}(\rho E_i)=\langle\psi,E_i\psi\rangle$.

A central question is whether the measurement statistics $\{p_i(\psi)\}$ are sufficient to uniquely identify the state $\psi$ (up to the inherent global phase ambiguity). If so, the POVM is called \emph{pure-state informationally complete} (PSI-Complete).

We focus on rank-one POVMs, where each operator $E_i$ has rank 1. In this case, $E_i =  v_i v_i^*$ for some vector $v_i \in \K^n$. The PSI-Complete condition is then equivalent to the injectivity of the map $\K^n/\K^\times \to \R^m$ given by $\psi \mapsto (|\langle v_i,\psi\rangle|^2)$. This mathematical problem is extensively studied in applied harmonic analysis and signal processing under the name of \emph{phase retrieval} \cite{eldar,candes,Fienup1982}.

The property of PSI-Complete POVM depends on the geometric configuration of the subspaces spanned by the vectors $\{v_i\}$. A frame $\mathcal{F}=\{v_i\}\subset\K^n$, which is a spanning set of vectors, can be canonically transformed into a rank-one POVM via a whitening process, as follows. Define the frame operator $S=\sum v_i v_i^*$, the vectors $w_i = S^{-1/2}v_i$ form a Parseval frame, and $\{E_i=w_i w_i^*\}$ constitutes a POVM. Importantly,  the PSI-Complete property is preserved under whitening; hence,  a frame $\{v_i\}$ allows for phase retrieval if and only if the associated POVM $\{w_i w_i^*\}$ is PSI-Complete. 
Consequently, we use the terminology and results of frame theory to study rank-one PSI-Complete POVMs, and we will use the terms `vital frame' and `vital rank-one POVM' interchangeably, as defined later.

The minimal number of measurements required for PSI-Complete has been a subject of significant research. It is known that a generic frame of size $m \geq 2n-1$ suffices in the real case~\cite{balan2006signal}, and $m \geq 4n-4$ suffices in the complex case~\cite{conca2015algebraic}. The bound is sharp in the real case, and if $n = 2^k +1$
in the complex case, although in the complex case exceptions exist, such as the existence of 11-element frames in $\C^4$ that permit phase retrieval~\cite{vinzant2015small}.

While these results establish the minimum necessary size, verifying whether a specific structured frame of size $O(n)$ achieves phase retrieval is computationally difficult. By contrast, practical QST often employs highly structured, larger measurements of size $O(n^2)$, such as those derived from Mutually Unbiased Bases~\cite{WoottersFields1989,balan2009painless} or certain Gabor systems~\cite{bojarovska2015}. These larger measurements offer advantages for verification and reconstruction but are redundant for pure-state identification.

We ask which measurements are exactly sufficient, i.e., PSI-complete with no redundancy. 
A POVM is said to be \emph{vital} if it is PSI-complete, but the removal of any single measurement operator $E_j$ (followed by renormalization of the remaining elements) destroys the PSI-complete property. 
Note that this definition applies to arbitrary POVMs, not only to rank-one POVMs.

By definition, generic frames of the minimal size ($2n-1$ or $4n-4$) are vital. Our interest lies in characterizing vital rank-one POVMs beyond these minimal sizes. The second author~\cite{gonzalez2024thesis} previously gave the first examples of real vital frames of size $m > 2n-1$. Here, we provide bounds and new constructions for both the real and complex cases.

\subsection{Motivation for the Real Case (Rebits) and Combinatorial Designs}
While standard quantum mechanics is formulated over $\mathbb{C}$, scenarios exist where states and measurements are effectively restricted to $\mathbb{R}$ relative to a fixed basis. This occurs, for instance, in systems exhibiting certain symmetries (e.g., time-reversal symmetry) or in the study of ``Rebits''. The real case is mathematically distinct, as evidenced by the different minimal bounds. Moreover, in the real case, PSI-Completeness (or phase retrieval) is equivalent to the fact that the associated frame admits the  Complement Property \cite{balan2006signal} (a frame in $\R^n$ has phase retrieval if and only if for every partition of the frame into two subsets, at least one of them spans $\R^n$). We exploit this property to connect the existence of structured vital rank-one POVMs to combinatorial designs. 
Block designs provide a framework for constructing measurements with sparse, structured incidence properties, which allow us to rigorously certify both the PSI-Complete property and vitality.

\subsection{Contributions}
The main contributions of this paper are as follows:
\begin{itemize}
    \item We establish sharp upper bounds on the size of vital rank-one POVMs. We prove that any vital measurement in $\R^n$ has at most $\binom{n+1}{2}$ outcomes, and in $\C^n$ has at most $n^2$ outcomes.
    \item We provide explicit constructions attaining these bounds. For example, in the real case, the measurement derived from $\{\lvert e_i\rangle, \lvert e_i{+}e_j\rangle\}_{i<j}$ is vital. In the complex case, adding the vectors $\lvert e_i{+}ie_j\rangle$ yields a vital measurement with $n^2$ outcomes.
    \item We introduce a general method for constructing a large family of  vital rank-one POVMs in $\R^n$ using combinatorial block designs. Specifically, we prove that for $w\mid n(n-1)$, any $(n,w,w\!-\!1)$ block design gives rise to a vital rank-one POVM  with $n+n(n-1)/w$ outcomes. This construction unifies known examples and provides new ones.
    \item We show that the families constructed via block designs do not exhaust the space of vital frames, showing the existence of vital frames of size $2n$ (for $n>3$) that are not equivalent to those arising from the corresponding block design construction.
\end{itemize}

\subsection{Organization}
The remainder of the paper is organized as follows. 
Section~\ref{preliminaries}
provides some preliminaries for Quantum State Tomography and defines the key concepts used in the paper. 
In Section \ref{sec:upper-bounds}, we prove the upper bounds on the size of vital rank-one POVMs in both the real and complex cases and provide explicit constructions showing that these bounds are sharp. Section~\ref{sec.design} provides the construction of vital rank-one POVMs from block designs in the real case, establishing the connection between the combinatorial properties of the design and the vitality of the resulting frame. In Section \ref{sec:equiv-vital-measurements}, we show that the block design construction does not provide all non-equivalent classes of vital rank-one POVMs. 
We conclude in Section \ref{sec:conclusions} with some open questions.
\section{Preliminaries}
\label{preliminaries}
In this section, we establish the mathematical framework for Quantum State Tomography (QST) and define the key concepts used throughout the paper.

\subsection{State Space and Operators}
Let $\K$ denote either the real field $\R$ or the complex field $\C$. We consider an $n$-dimensional Hilbert space $\K^n$.

We denote the space of real symmetric $n \times n$ matrices by $\S_n$ (when $\K=\R$), and the space of Hermitian $n \times n$ matrices by $\H_n$ (when $\K=\C$). Both are real vector spaces, equipped with the Hilbert-Schmidt inner product $\langle A, B \rangle = \operatorname{Tr}(A^*B)$. The dimensions of these spaces are $\dim \S_n = \binom{n+1}{2}$ and $\dim \H_n = n^2$.

A quantum state is described by a density matrix $\rho$, which is a positive semidefinite operator ($\rho \succeq 0$) with $\operatorname{Tr}(\rho)=1$. A state is \emph{pure} if $\rho$ has rank 1, in which case it can be written as $\rho = \lvert\psi\rangle\langle\psi\rvert$ (denoted $\operatorname{proj}(\psi)$) for some unit vector $\psi \in \K^n$. The vector $\psi$ is determined uniquely up to a global phase. If the rank of $\rho$ is greater than 1, the state is \emph{mixed}.

\subsection{Measurements and the Measurement Map}
\label{subsec:measurement-map}
A finite-outcome measurement is modeled by a positive operator-valued measure (POVM) as follows.

\begin{definition}[POVM, rank-one POVM]
A POVM on $\K^n$ is a finite set $\mathcal{E}=\{E_i\}_{i=1}^m$ of operators such that $E_i\succeq0$ for all $i$ and $\sum_{i=1}^m E_i=I$. The POVM is \emph{rank-one} if $\operatorname{rank}(E_i)=1$ for all $i$.
\end{definition}

If $\mathcal{E}$ is a rank-one POVM, each $E_i$ can be written as $E_i = v_i v_i^*$ for some vector $v_i \in \K^n$. In this case, it is easy to verify that the set of vectors $\mathcal{F}=\{v_i\}_{i=1}^m$ spans $\K^n$ and is called a frame; since $\sum v_i v_i^* = I$, it is specifically a Parseval frame, i.e., $\sum_i |\langle x,v_i\rangle|^2=\|x\|
^2$ for all $x\in \K^n.$

When a state $\rho$ is measured using a POVM $\mathcal{E}$, the probability of the $i$-th outcome is given by the Born rule: $p_i = \operatorname{Tr}(\rho E_i)$. Following~\cite{bandeira2013saving}, we define the associated linear measurement map $\Phi_{\mathcal{E}}$ as follows. When $\K=\R$, the map is defined as
\begin{equation} \label{eq.real_map}
\Phi_{\mathcal{E}} \colon \S_{n} \to \R^m; \quad A \mapsto \left( \langle A, E_1 \rangle, \ldots, \langle A, E_m \rangle \right),
\end{equation}
and when $\K=\C$, it is defined as
\begin{equation} \label{eq.complex_map}
\Phi_{\mathcal{E}} \colon \H_{n} \to \R^m; \quad A \mapsto \left( \langle A, E_1 \rangle, \ldots, \langle A, E_m \rangle \right).
\end{equation}
The goal of QST is to reconstruct the state $\rho$ from the measurement outcomes $\Phi_{\mathcal{E}}(\rho)$.

\subsection{Informational Completeness and Vitality}

The ability to reconstruct states depends on the injectivity of the measurement map $\Phi_{\mathcal{E}}$.

\begin{definition}[Informational Completeness (IC)]
A POVM $\mathcal{E}$ is \emph{Informationally Complete} (IC) if the map $\Phi_{\mathcal{E}}$ is injective on the entire space $\S_n$ (resp. $\H_n$).
\end{definition}

An IC POVM allows for the reconstruction of any state, including mixed states. This requires that the operators $\{E_i\}$ span the entire operator space. Consequently, a necessary condition for IC is $m \geq \binom{n+1}{2}$ in the real case or $m \geq n^2$ in the complex case.

In this paper, we are interested in the scenario in which the state is known a priori to be pure. Next, we define when a measurement map is informationally complete for the set of pure states.

\begin{definition}[Pure-State Informationally Complete (PSI-Complete)]
A POVM $\mathcal{E}$ is PSI-Complete if for all unit vectors $\psi,\phi \in \K^n$, the condition
\[
\Phi_{\mathcal{E}}(\operatorname{proj}(\psi)) = \Phi_{\mathcal{E}}(\operatorname{proj}(\phi))
\]
implies that $\operatorname{proj}(\psi) = \operatorname{proj}(\phi)$. In other words, $\Phi_\mathcal{E}$ is injective when restricted to the set $\{\operatorname{proj}(\psi):\psi\in \K^n, \|\psi\|=1 \}$. 
\end{definition}

By linearity, $\mathcal{E}$ is PSI-Complete if and only if there are no nonzero operators of the form $\operatorname{proj}(\psi) - \operatorname{proj}(\phi)$ in the kernel of $\Phi_{\mathcal{E}}$. Such operators are indefinite and have rank   $2$. Lastly, note that IC implies PSI-Complete.

\subsection{Conditioning on \texorpdfstring{$\neg j$}{not j} and Vitality}
Let $\mathcal{E}$ be a POVM, fix $j$ and set $S_j:=\sum_{i\neq j}E_i=I-E_j$. For a pure state $\rho$, 
the probability for outcome $i$ is 
$p_\rho(i)=\Tr(\rho E_i)$. Assume that we discard the measurements \emph{with} outcome $j$,  then the conditional probabilities given “not $j$” are
\[
p_\rho(i\,|\,\neg j)=\frac{\Tr(\rho E_i)}{\Tr(\rho S_j)}\qquad (i\neq j),
\]
which are defined exactly on the domain
\[
\mathcal{D}_j\ :=\ \{\ \rho\ \text{pure} \ :\ \Tr(\rho S_j)>0\ \}.
\]
We say that the subcollection $\{E_i\}_{i\neq j}$ is PSI-Complete when the map
\[
\widetilde\Phi_{\mathcal{E},j}:\ \mathcal{D}_j\ \longrightarrow\ \R^{m-1},\quad
\rho\ \longmapsto\ \big(p_\rho(i\,|\,\neg j)\big)_{i\neq j}
\]
is injective on $\mathcal{D}_j$.

\medskip
The following proposition shows that having a trivial kernel for $S_j$ is necessary for the subcollection $\{E_i\}_{i\neq j}$ to be PSI-Complete. Furthermore, under this assumption, it gives a necessary and sufficient condition for the subcollection to be PSI-Complete.
\begin{proposition}
\label{prop:good_prop}
\begin{enumerate}
    \item 
If the subcollection $\{E_i\}_{i\neq j}$ is PSI-Complete (i.e., $\widetilde\Phi_{\mathcal{E},j}$ is injective on $\mathcal D_j$), then $S_j$ has trivial kernel and $\mathcal D_j$ is the set of all pure states.
\item Assume that $S_j$ has a trivial kernel. Then the subcollection $\{E_i\}_{i\neq j}$ is PSI-Complete if and only if there do not exist distinct pure states $\rho,\sigma\in\mathcal D_j$ and $\alpha>0$ with
\[
\Tr(\rho E_i)=\alpha\,\Tr(\sigma E_i)\quad\text{for all }i\neq j.
\]
\end{enumerate}
\end{proposition}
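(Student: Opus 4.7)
My plan is to dispatch the two parts separately: Part (1) by contrapositive, exhibiting two distinct pure states in $\mathcal{D}_j$ with the same image under $\widetilde\Phi_{\mathcal{E},j}$; Part (2) by an algebraic rearrangement.

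For Part (1), assume $S_j$ has a nontrivial kernel and fix a unit vector $\psi_0 \in \ker S_j$. The key observation is that
\[
0 \;=\; \langle \psi_0, S_j\psi_0\rangle \;=\; \sum_{i\ne j}\langle \psi_0, E_i\psi_0\rangle,
\]
and since each $E_i\succeq 0$, every summand vanishes; because a PSD operator annihilates $x$ whenever $\langle x, Ax\rangle = 0$, this forces $E_i\psi_0 = 0$ for every $i\ne j$. We may assume $S_j \ne 0$ (otherwise the subcollection is empty), so there exists a unit $\phi \in \psi_0^\perp$ with $S_j\phi\ne 0$: if $S_j$ vanished on all of $\psi_0^\perp$, combined with $S_j\psi_0=0$ it would vanish on $\K^n$. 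Now pick scalars $a,b\ne 0$ with $|a|^2+|b|^2=1$ and set $\psi := a\psi_0+b\phi$. Because $E_i\psi_0 = 0$, all cross terms vanish and $\langle\psi, E_i\psi\rangle = |b|^2\langle\phi, E_i\phi\rangle$ for every $i\ne j$. Summing over $i\ne j$ gives $\langle\psi, S_j\psi\rangle = |b|^2\langle\phi, S_j\phi\rangle > 0$, so both $\operatorname{proj}(\psi)$ and $\operatorname{proj}(\phi)$ lie in $\mathcal{D}_j$; they are distinct since $a\ne 0$; and their conditional-probability vectors agree, contradicting PSI-Completeness. The second clause of Part (1) is then immediate: a PSD operator with trivial kernel is strictly positive definite, so $\langle\psi, S_j\psi\rangle>0$ for every unit $\psi$, whence $\mathcal{D}_j$ is all pure states.

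For Part (2), assume $S_j$ has trivial kernel. If $\rho,\sigma,\alpha$ satisfy $\Tr(\rho E_i) = \alpha\,\Tr(\sigma E_i)$ for all $i\ne j$, summing over $i\ne j$ yields $\Tr(\rho S_j) = \alpha\,\Tr(\sigma S_j)$, so the ratios $p_\rho(i\mid\neg j)$ and $p_\sigma(i\mid\neg j)$ coincide, violating injectivity of $\widetilde\Phi_{\mathcal{E},j}$. Conversely, if $\widetilde\Phi_{\mathcal{E},j}(\rho) = \widetilde\Phi_{\mathcal{E},j}(\sigma)$ for distinct pure states $\rho,\sigma \in \mathcal{D}_j$, then setting $\alpha := \Tr(\rho S_j)/\Tr(\sigma S_j) > 0$ (well-defined and positive because $\mathcal{D}_j$ is all pure states and both denominators are strictly positive) yields exactly the claimed proportionality.

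The only nontrivial ingredient is the PSD fact that $\ker\!\bigl(\sum_i E_i\bigr) = \bigcap_i \ker(E_i)$ when each $E_i\succeq 0$, which is what licenses the step $E_i\psi_0=0$; with this in hand the remainder is bookkeeping, so I do not anticipate a real obstacle.
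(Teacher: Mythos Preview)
Your argument is correct and follows essentially the same route as the paper: in Part~(1) both you and the paper use the key observation that $\psi_0\in\ker S_j$ forces $E_i\psi_0=0$ for all $i\ne j$ (via positivity), and then perturb a state in $\mathcal D_j$ by $\psi_0$ to produce two distinct pure states with identical conditional probabilities; in Part~(2) both arguments are the same straightforward algebraic rearrangement. The only cosmetic difference is that you choose $\phi\perp\psi_0$ and form $a\psi_0+b\phi$, whereas the paper takes an arbitrary $u$ with $\langle u,S_ju\rangle>0$ and considers the one-parameter family $(u+tk)/\|u+tk\|$; neither choice gains anything over the other.
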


\begin{proof}(1)  Assume towards a contradiction that $\ker S_j\neq\{0\}$. Pick a nonzero $k\in\ker S_j$. Then $$0=\langle k,S_jk\rangle=\sum_{i\neq j}\langle k,E_i k\rangle.$$ Since each $E_i\succeq0$, we have $E_i k=0$ for all $i\neq j$. Since $S_j\neq 0$, there exists a unit vector $u$ with $\Tr(\proj{u}S_j)>0$. Note that $u$ and $k$ are linearly independent since $k\in\ker S_j$ and $u\notin\ker S_j$. Define $N_t := \|u+tk\|$ and the unit vector $\psi_t=(u+tk)/N_t$. For $i\neq j$, we calculate the probabilities. Since $E_i k=0$ and $E_i$ is Hermitian (which implies $\langle k, E_i u \rangle = \langle E_i k, u \rangle = 0$), the cross terms vanish:$$\Tr(\proj{\psi_t}E_i)=\frac{\langle u+tk, E_i(u+tk) \rangle}{N_t^2} = \frac{\langle u, E_i u \rangle}{N_t^2} = \frac{\Tr(\proj{u}E_i)}{N_t^2}.$$Similarly, since $S_j k=0$ and $S_j$ is Hermitian,$$\Tr(\proj{\psi_t}S_j)=\frac{\Tr(\proj{u}S_j)}{N_t^2}.$$ Thus the conditional probabilities$$p_{\proj{\psi_t}}(i\,|\,\neg j)=\frac{\Tr(\proj{u}E_i)/N_t^2}{\Tr(\proj{u}S_j)/N_t^2} = \frac{\Tr(\proj{u}E_i)}{\Tr(\proj{u}S_j)}$$are independent of $t$. Since $u$ and $k$ are linearly independent, $\proj{\psi_t}$ varies with $t$. Hence $\widetilde\Phi_{\mathcal{E},j}$ is not injective on $\mathcal D_j$, a contradiction. Therefore, $S_j$ has a trivial kernel. This implies $S_j\succ 0$, so for every pure state $\rho=\proj{\psi}$ we have$\Tr(\rho S_j)=\langle\psi,S_j\psi\rangle>0$. Hence $\mathcal D_j$ equals the set of all pure states.

\smallskip
(2) \emph{($\Rightarrow$)} Suppose $\{E_i\}_{i\neq j}$ is PSI-Complete on $\mathcal D_j$. If there existed  pure states $\rho,\sigma\in\mathcal D_j$ and $\alpha>0$ with
$\Tr(\rho E_i)=\alpha\,\Tr(\sigma E_i)$ for all $i\neq j$, then summing over $i\neq j$ yields
$\Tr(\rho S_j)=\alpha\,\Tr(\sigma S_j)>0$, and dividing componentwise gives
\[
p_\rho(i\,|\,\neg j)
=\frac{\Tr(\rho E_i)}{\Tr(\rho S_j)}
=\frac{\Tr(\sigma E_i)}{\Tr(\sigma S_j)}
=p_\sigma(i\,|\,\neg j)\quad\forall\,i\neq j,
\]
which contradicts the injectivity of $\widetilde\Phi_{\mathcal{E},j}$.

\emph{($\Leftarrow$)} Conversely, assume there do \emph{not} exist distinct pure states \(\rho,\sigma\in\mathcal D_j\) and \(\alpha>0\) with \(\Tr(\rho E_i)=\alpha\,\Tr(\sigma E_i)\) for all \(i\neq j\). If $\{E_i\}_{i\neq j}$ were not PSI-Complete, there would exist distinct $\rho,\sigma\in\mathcal D_j$ such that
\[
\frac{\Tr(\rho E_i)}{\Tr(\rho S_j)}=\frac{\Tr(\sigma E_i)}{\Tr(\sigma S_j)}\quad\forall\,i\neq j.
\]
Let \(\alpha:=\Tr(\rho S_j)/\Tr(\sigma S_j)>0\). Multiplying both sides by \(\Tr(\rho S_j)\) gives
\[
\Tr(\rho E_i)=\alpha\,\Tr(\sigma E_i)\quad\forall\,i\neq j,
\]
contradicting the assumption. Therefore $\widetilde\Phi_{\mathcal{E},j}$ is injective.
\end{proof}

\medskip
The following lemma provides a simple necessary and sufficient condition for the subcollection $\{E_i\}_{i\neq j}$ to be PSI-Complete, phrased entirely in terms of unnormalized probabilities.

\begin{lemma}
\label{lem:psic-unnormalized-criterion}
The subcollection $\{E_i\}_{i\neq j}$ is PSI-Complete if and only if 
\begin{enumerate}
\item $S_j$ is invertible, and
\item for any two pure states $\rho\neq\sigma$, their images under the unnormalized map
\[
\Phi_{\mathcal{E}_{\setminus j}}:\ \rho\ \longmapsto\ \big(\Tr(\rho E_i)\big)_{i\neq j},
\]
are not positive scalar multiples of one another.
\end{enumerate}
\end{lemma}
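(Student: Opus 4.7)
The plan is to derive the lemma almost immediately from Proposition \ref{prop:good_prop}, by simply translating its conclusion into the language of "positive scalar multiples." The phrase "$(\Tr(\rho E_i))_{i \neq j}$ and $(\Tr(\sigma E_i))_{i \neq j}$ are positive scalar multiples of one another" means precisely that there exists $\alpha > 0$ with $\Tr(\rho E_i) = \alpha\, \Tr(\sigma E_i)$ for all $i \neq j$, which is the exact condition appearing in Proposition \ref{prop:good_prop}(2).

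For the forward direction, I assume $\{E_i\}_{i \neq j}$ is PSI-Complete. Part (1) of Proposition \ref{prop:good_prop} gives that $S_j$ has trivial kernel, hence (since $S_j$ is positive semidefinite) is invertible, establishing condition (1); moreover, $\mathcal{D}_j$ equals the set of all pure states. Then the ($\Rightarrow$) direction of Proposition \ref{prop:good_prop}(2) rules out the existence of distinct pure states $\rho, \sigma$ and $\alpha > 0$ with $\Tr(\rho E_i) = \alpha\, \Tr(\sigma E_i)$ for all $i \neq j$; this is exactly condition (2).

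For the reverse direction, I assume (1) and (2). Invertibility of $S_j$ means $S_j \succ 0$, so for every pure state $\rho$ one has $\Tr(\rho S_j) > 0$, and therefore $\mathcal{D}_j$ is the full set of pure states, putting us in the hypothesis of Proposition \ref{prop:good_prop}(2). Condition (2) is precisely the negation of the existence of distinct pure $\rho, \sigma$ and $\alpha > 0$ with $\Tr(\rho E_i) = \alpha\, \Tr(\sigma E_i)$ for all $i \neq j$, so the ($\Leftarrow$) direction of Proposition \ref{prop:good_prop}(2) yields PSI-Completeness of $\{E_i\}_{i \neq j}$.

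There is no real obstacle here; the only point demanding slight care is verifying that the "positive scalar multiple" formulation is unambiguous. Since $S_j$ is invertible under either hypothesis, summing $\Tr(\rho E_i) = \alpha\, \Tr(\sigma E_i)$ over $i \neq j$ gives $\Tr(\rho S_j) = \alpha\, \Tr(\sigma S_j)$ with both sides strictly positive, so the tuples are genuinely nonzero nonnegative vectors and the relation "positive scalar multiple" is symmetric and matches the $\alpha > 0$ condition in the proposition.
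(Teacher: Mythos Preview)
Your proof is correct and follows essentially the same approach as the paper: both directions are obtained by combining Proposition~\ref{prop:good_prop}(1) (for invertibility of $S_j$ and fullness of $\mathcal D_j$) with Proposition~\ref{prop:good_prop}(2) (for the proportionality condition). The paper's version inlines the computation from Proposition~\ref{prop:good_prop}(2) rather than citing it, but the logic is identical.
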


\begin{proof}
(\(\Rightarrow\)) If the subcollection is PSI-Complete, then by Proposition \ref{prop:good_prop} (1) we must have $\ker S_j=\{0\}$, so $S_j$ is invertible and $\mathcal D_j$ is all pure states. If there existed distinct pure states $\rho\neq\sigma$ and $\alpha>0$ with $\Tr(\rho E_i)=\alpha\,\Tr(\sigma E_i)$ for all $i\neq j$, then  we would get $p_\rho(i\,|\,\neg j)=p_\sigma(i\,|\,\neg j)$ for all $i\neq j$, contradicting PSI-Complete.

(\(\Leftarrow\)) Assume $S_j$ is invertible and no proportional pair exists. Then $\mathcal D_j$ is all pure states. If the subcollection were not PSI-Complete, there would be distinct pure states $\rho,\sigma$ with
\[
\frac{\Tr(\rho E_i)}{\Tr(\rho S_j)}=\frac{\Tr(\sigma E_i)}{\Tr(\sigma S_j)}\quad\forall\,i\neq j.
\]
Setting $\alpha:=\Tr(\rho S_j)/\Tr(\sigma S_j)>0$ gives $\Tr(\rho E_i)=\alpha\,\Tr(\sigma E_i)$ for all $i\neq j$, contradicting the hypothesis. Hence, the subcollection is PSI-Complete.
\end{proof}
Next, we define the central concept of this paper: measurements that are minimally sufficient for pure-state tomography.

\begin{definition}[Vital Rank-One POVM]
A rank-one POVM $\mathcal{E}=\{E_i\}_{i=1}^m$ on $\K^n$ is \emph{vital} if it is PSI-Complete, but for every $j\in\{1,\ldots,m\}$ 
 the subcollection $\{E_i\}_{i\neq j}$ is not PSI-Complete.
\end{definition}
The following corollary, which follows easily from Lemma \ref{lem:psic-unnormalized-criterion}, provides a simple necessary and sufficient condition for a rank-one POVM to be vital and PSI-Complete.
\begin{corollary}
\label{stam_cor}A rank-one POVM $\mathcal{E}=\{E_i\}_{i=1}^m$ on $\K^n$ is vital if and only if it is PSI-Complete and for each $j$,  $S_j=I-E_j$ is singular or there exist distinct pure states whose unnormalized vectors $\big(\Tr(\rho E_i)\big)_{i\neq j}$ are proportional.
\end{corollary}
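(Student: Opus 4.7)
The plan is to simply unpack the definition of vitality and apply Lemma \ref{lem:psic-unnormalized-criterion} to each subcollection, taking the contrapositive. By definition, a rank-one POVM $\mathcal{E}$ is vital if and only if it is PSI-Complete and, for every index $j \in \{1,\ldots,m\}$, the subcollection $\{E_i\}_{i \neq j}$ fails to be PSI-Complete. The first conjunct appears verbatim in the corollary, so only the second requires interpretation.

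Next I would apply Lemma \ref{lem:psic-unnormalized-criterion} to the subcollection $\{E_i\}_{i\neq j}$. The lemma characterizes PSI-Completeness of such a subcollection by the conjunction of two conditions: (1) $S_j = I - E_j$ is invertible, and (2) no two distinct pure states have proportional unnormalized measurement vectors $(\Tr(\rho E_i))_{i\neq j}$. The failure of this conjunction is precisely the disjunction of the negations: either $S_j$ is singular, or there exist distinct pure states $\rho,\sigma$ and $\alpha>0$ with $\Tr(\rho E_i) = \alpha\,\Tr(\sigma E_i)$ for all $i\neq j$. Substituting this negated form back into the definition of vital gives exactly the statement of the corollary.

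There is essentially no obstacle here; the corollary is purely a logical restatement of the lemma together with the definition of vitality. The only minor point to keep in mind is to correctly interpret ``proportional'' in the corollary as ``positive scalar multiples of each other'', which matches the $\alpha > 0$ appearing in Lemma \ref{lem:psic-unnormalized-criterion}. With this understood, a one-paragraph formal proof suffices, simply writing out both directions of the biconditional via the lemma applied separately to each $j$.
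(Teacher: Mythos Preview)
Your proposal is correct and matches the paper's approach exactly: the paper states that the corollary ``follows easily from Lemma~\ref{lem:psic-unnormalized-criterion}'' without giving further details, and your argument---taking the contrapositive of the lemma's two-part characterization for each subcollection $\{E_i\}_{i\neq j}$---is precisely that easy derivation. Your remark about interpreting ``proportional'' as ``positive scalar multiples'' is also the right reading.
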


\subsection{Frames, Phase Retrieval, and Vitality}
A finite set of vectors $\mathcal{F}=\{v_i\}_{i=1}^m \subset \K^n$ is called a frame if it spans $\K^n$. The associated frame operator is $S=\sum v_i v_i^*$. If $S=I$, the frame is a Parseval frame. As noted in Section \ref{subsec:measurement-map}, a rank-one POVM corresponds exactly to a Parseval frame.
The problem of reconstructing a pure state $\psi$ from the measurements $\{|\langle v_i, \psi \rangle|^2\}_{i=1}^m$ generated by a frame $\mathcal{F}$ is known as phase retrieval. We say a frame $\mathcal{F}$ permits phase retrieval if the map induced by $ \mathcal {F} $ is injective (up to global phase).
Any frame $\mathcal{F}$ can be associated with a rank-one POVM via a whitening process: the vectors $w_i = S^{-1/2}v_i$ form the canonical Parseval frame, and $\{E_i = w_i w_i^*\}$ is the associated POVM.
\begin{remark}    
[Equivalence of Properties] A crucial observation is that a frame $\mathcal{F}$ permits phase retrieval if and only if its associated POVM is PSI-Complete.
Furthermore, the definition of vitality extends naturally to frames: a frame is vital if it permits phase retrieval, but removing any vector destroys this property. This property is also preserved under the whitening transformation. Therefore, throughout this paper, we often analyze the properties of the underlying frames directly, with the understanding that the results apply directly to the corresponding normalized POVMs.\end{remark}
\section{Bounds on the Size of vital rank-one POVMs}
\label{sec:upper-bounds}
In this section, we establish the maximum possible size of a vital rank-one POVM in both real and complex Hilbert spaces and provide explicit constructions that achieve these bounds.

\begin{theorem} \label{thm.bounds}
Any vital rank-one POVM in $\R^n$ has size at most $\binom{n+1}{2}$, and any vital rank-one POVM in $\C^n$ has size at most $n^2$. Moreover, these bounds are sharp for all $n$.
\end{theorem}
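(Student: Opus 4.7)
The plan is to first establish the upper bounds via a linear-dependence argument in the operator spaces $\S_n$ and $\H_n$, and then to exhibit explicit vital rank-one POVMs attaining the bounds.

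For the upper bounds, the operators $E_i = v_i v_i^*$ of any rank-one POVM lie in the real vector space $\S_n$ (if $\K=\R$) or $\H_n$ (if $\K=\C$), of real dimension $\binom{n+1}{2}$ and $n^2$ respectively. If $m$ strictly exceeds this dimension, then $\{E_i\}_{i=1}^m$ admits a nontrivial linear relation $\sum_i c_i E_i = 0$. The goal is to identify some index $j$ with $c_j \neq 0$ for which the subcollection $\{E_i\}_{i\neq j}$ remains PSI-Complete, contradicting vitality.

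The key tool in selecting such a $j$ is a rigidity observation: if $\|v_j\|^2 = 1$ (so $E_j$ is an orthogonal projection), then $\sum_{i\neq j} E_i = I-E_j$ is the projection onto $v_j^\perp$, forcing $v_i \perp v_j$ for all $i\neq j$ by positivity of the $E_i$, and hence $\langle E_j, E_i\rangle_{\mathrm{HS}} = |\langle v_j, v_i\rangle|^2 = 0$. Taking the Hilbert--Schmidt inner product of $\sum_i c_i E_i = 0$ with $E_j$ yields $c_j\|E_j\|^2_{\mathrm{HS}} = 0$, so $c_j = 0$. Thus any $j$ with $c_j\neq 0$ is non-rigid, and in particular $S_j = I-E_j$ is invertible. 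I will then invoke Lemma~\ref{lem:psic-unnormalized-criterion}: suppose there exist distinct pure states $\rho,\sigma$ and $\alpha>0$ with $\Tr(\rho E_i) = \alpha\Tr(\sigma E_i)$ for all $i\neq j$. Using the relation to express $\Tr(\rho E_j) = -c_j^{-1}\sum_{i\neq j}c_i\Tr(\rho E_i)$, and the analogous identity for $\sigma$, one obtains $\Tr(\rho E_j) = \alpha\Tr(\sigma E_j)$, so $\Phi_{\mathcal E}(\rho) = \alpha\Phi_{\mathcal E}(\sigma)$; summing coordinates forces $\alpha=1$, and PSI-Completeness of $\mathcal E$ then gives $\rho=\sigma$, a contradiction. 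Hence $\{E_i\}_{i\neq j}$ is PSI-Complete, contradicting vitality and establishing both bounds.

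For sharpness, in the real case I take the frame $\mathcal F_\R = \{e_i\}_{i=1}^n \cup \{e_i+e_j\}_{1\leq i<j\leq n}$ of size $\binom{n+1}{2}$. The rank-one operators $E_{ii}$ and $E_{ii}+E_{jj}+E_{ij}+E_{ji}$ span $\S_n$ (their differences yield $E_{ij}+E_{ji}$), so after whitening, the associated POVM is IC and hence PSI-Complete. Vitality is verified via Corollary~\ref{stam_cor}: removing $e_k+e_l$ leaves a one-dimensional kernel spanned by $E_{kl}+E_{lk} = \proj{(e_k+e_l)/\sqrt 2} - \proj{(e_k-e_l)/\sqrt 2}$, exhibiting an ambiguous pair of pure states; removing $e_k$ leaves a rank-two indefinite kernel element concentrated on the $k$-th row and column that, after spectral decomposition and scaling, produces a proportional pair of distinct pure states. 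In the complex case, augment $\mathcal F_\R$ by $\{e_i+ie_j\}_{i<j}$ to reach $n^2$ vectors; the extra operators contribute the antisymmetric Hermitian basis elements $i(E_{ji}-E_{ij})$, giving IC and analogous explicit vitality witnesses for each removed vector.

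The main obstacle lies in the vitality check of the sharpness step: for each removed frame vector one must exhibit an explicit proportional pair of distinct pure states (equivalently, a rank-two indefinite kernel element of the truncated measurement map). The case analysis splits by vector type --- $e_k$, $e_k+e_l$, and in $\C^n$ additionally $e_k+ie_l$ --- and while each individual computation is elementary linear algebra, organizing the cases uniformly, especially over $\C$, is the most laborious part. By contrast, the upper bound itself is clean, resting entirely on the rigidity observation that a linear dependence among rank-one POVM elements cannot involve unit-norm frame vectors, together with a single application of Lemma~\ref{lem:psic-unnormalized-criterion}.
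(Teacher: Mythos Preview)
Your proposal is correct, and the sharpness half matches the paper's constructions and case analysis. The upper-bound half, however, takes a genuinely different route. The paper argues by constructing, for each index $j$, an explicit witness operator $A_j$ (either $\operatorname{proj}(k_j)$ when $S_j$ is singular, or $\rho_j-\alpha_j\sigma_j$ when a proportional pair exists) satisfying $\langle A_j,E_i\rangle=0$ for $i\neq j$ and $\langle A_j,E_j\rangle\neq 0$; this yields $m$ linearly independent elements of $\S_n$ (resp.\ $\H_n$), hence $m\le\dim$. Your argument instead shows directly that the POVM elements $\{E_i\}$ themselves are linearly independent: assuming a nontrivial relation $\sum_i c_iE_i=0$, your rigidity observation forces any $j$ with $c_j\neq 0$ to have $\|v_j\|<1$, whence $S_j$ is invertible and $E_j\in\Span\{E_i\}_{i\neq j}$ lets you promote any proportional pair on the subcollection to a proportional pair on the full POVM, contradicting PSI-completeness. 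Your route is arguably cleaner---it yields the succinct statement ``the elements of a vital rank-one POVM are linearly independent in the operator space''---while the paper's dual construction has the advantage of producing explicit witnesses $A_j$, which are exactly the objects one later exhibits in the sharpness step.
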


We first prove the upper bounds using the properties of the measurement map and the characterization of vitality defined in Section \ref{preliminaries}, and then show that the bounds are sharp via explicit constructions.

\subsection{Proof of the Upper Bounds}

Let $\mathcal{E} = \{E_1, \ldots, E_m\}$ be a vital rank-one POVM in $\K^n$. Let $\Phi_{\mathcal{E}}$ be the associated measurement map acting on $\S_n$ (if $\K=\R$) or $\H_n$ (if $\K=\C$). We aim to construct a set of $m$ linearly independent operators $\{A_1, \ldots, A_m\}$ in the corresponding operator space.

Since $\mathcal{E}$ is vital, by definition, it is PSI-Complete, and for every $j \in \{1, \ldots, m\}$, the subcollection $\mathcal{E}_{\setminus j} = \{E_i\}_{i \neq j}$ is not PSI-Complete. We utilize the characterization provided by Corollary~\ref{stam_cor}. For each $j$, one of the following two cases must hold:

\medskip
\noindent\textbf{Case 1: $S_j=I-E_j$ is singular.}
Since $E_j$ is a rank-one operator, $S_j$ is singular if and only if $1$ is an eigenvalue of $E_j$. Let $k_j$ be a corresponding unit eigenvector, so $E_j k_j = k_j$. This implies $S_j k_j = (I-E_j)k_j = 0$.
We define the operator $A_j = \operatorname{proj}(k_j)$. $A_j$ belongs to the operator space ($\S_n$ or $\H_n$).

We compute the inner products of $A_j$ with the POVM elements. First,
\begin{equation}
\langle A_j, E_j \rangle = \operatorname{Tr}(\operatorname{proj}(k_j) E_j) = \langle k_j, E_j k_j \rangle = \langle k_j, k_j \rangle = 1 \neq 0.
\end{equation}
Next, consider $i \neq j$. Since $S_j = \sum_{i\neq j} E_i$, then,
\[
\sum_{i\neq j} \langle A_j, E_i \rangle = \sum_{i\neq j} \langle k_j, E_i k_j \rangle = \langle k_j, S_j k_j \rangle = 0.
\]
Since $E_i \succeq 0$, we must have $\langle k_j, E_i k_j \rangle \geq 0$. Therefore,
\begin{equation}
\langle A_j, E_i \rangle = 0 \quad \text{for all } i \neq j.
\end{equation}

\medskip
\noindent\textbf{Case 2: $S_j$ is invertible, but there exist distinct pure states $\rho_j \neq \sigma_j$ and a scalar $\alpha_j > 0$ such that $\Phi_{\mathcal{E}_{\setminus j}}(\rho_j) = \alpha_j \Phi_{\mathcal{E}_{\setminus j}}(\sigma_j)$.}
This means $\operatorname{Tr}(\rho_j E_i) = \alpha_j \operatorname{Tr}(\sigma_j E_i)$ for all $i \neq j$.

We define the operator $A_j = \rho_j - \alpha_j \sigma_j$. $A_j$ is in the operator space.

Again, we compute the inner products with the POVM elements. For $i \neq j$,
\begin{equation}
\langle A_j, E_i \rangle = \operatorname{Tr}(A_j E_i) = \operatorname{Tr}(\rho_j E_i) - \alpha_j \operatorname{Tr}(\sigma_j E_i) = 0.
\end{equation}

Now consider the inner product with $E_j$. We use the fact that $\sum_i E_i = I$ and $\operatorname{Tr}(\rho_j)=\operatorname{Tr}(\sigma_j)=1$. Summing the proportionality relation over $i\neq j$:
\[
\operatorname{Tr}(\rho_j S_j) = \sum_{i\neq j} \operatorname{Tr}(\rho_j E_i) = \alpha_j \sum_{i\neq j} \operatorname{Tr}(\sigma_j E_i) = \alpha_j \operatorname{Tr}(\sigma_j S_j).
\]
Since $\operatorname{Tr}(\rho S_j) = \operatorname{Tr}(\rho(I-E_j)) = 1 - \operatorname{Tr}(\rho E_j)$, we have
\[
1 - \operatorname{Tr}(\rho_j E_j) = \alpha_j (1 - \operatorname{Tr}(\sigma_j E_j)).
\]
Rearranging this equation gives
\begin{equation}
\langle A_j, E_j \rangle = \operatorname{Tr}((\rho_j - \alpha_j \sigma_j) E_j) = \operatorname{Tr}(\rho_j E_j) - \alpha_j \operatorname{Tr}(\sigma_j E_j) = 1 - \alpha_j.
\end{equation}

Next, we show  that $\langle A_j, E_j \rangle \neq 0$, i.e., $\alpha_j \neq 1$. Suppose $\alpha_j = 1$. Then $\Phi_{\mathcal{E}_{\setminus j}}(\rho_j) = \Phi_{\mathcal{E}_{\setminus j}}(\sigma_j)$, and the calculation above shows $\operatorname{Tr}(\rho_j E_j) = \operatorname{Tr}(\sigma_j E_j)$. Combining these, we obtain $\Phi_{\mathcal{E}}(\rho_j) = \Phi_{\mathcal{E}}(\sigma_j)$. Since $\mathcal{E}$ is PSI-Complete (as it is vital), this implies $\rho_j = \sigma_j$. This contradicts Case 2, i.e.,  that $\rho_j$ and $\sigma_j$ are distinct. Therefore, $\alpha_j \neq 1$, and $\langle A_j, E_j \rangle \neq 0$.

\medskip
In both cases, for each $j$, we have constructed an operator $A_j$ in the operator space such that $\langle A_j, E_i \rangle = 0$ for $i \neq j$, and $\langle A_j, E_j \rangle \neq 0$. This implies that   $\{A_1, \ldots, A_m\}$ is a linearly independent set in the operator space, and therefore its size $m$ cannot exceed the dimension of the space. Thus, in the real case, $m \leq \dim \S_n = \binom{n+1}{2}$, and in the complex case, $m \leq \dim \H_n = n^2$.
\subsection{Proof that the Bounds are Sharp}

We now construct explicit vital rank-one POVMs achieving the maximum sizes. We describe these using their corresponding frames, recalling that the PSI-Complete and vitality properties are preserved under renormalization.

\begin{proposition} \label{prop.real_sharp}
The real frame $\mathcal{F}_{\R} = \left\{e_1, \ldots , e_n\right\} \cup \left\{e_i + e_j\right\}_{1 \leq i < j \leq n}$ in $\R^n$ is vital.
\end{proposition}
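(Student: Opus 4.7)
The plan is to invoke the Complement Property characterization stated in the introduction: a real frame permits phase retrieval if and only if it cannot be covered by the union of two hyperplanes of $\R^n$. Under this criterion, the proof splits into two complementary tasks: (i) show that $\mathcal{F}_\R$ itself is not covered by any pair of hyperplanes, and (ii) for each $v \in \mathcal{F}_\R$, exhibit two hyperplanes that do cover $\mathcal{F}_\R \setminus \{v\}$.

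For (i), I would proceed by contradiction. Suppose $\mathcal{F}_\R \subseteq H_1 \cup H_2$ with $H_r = (x^{(r)})^\perp$ for nonzero $x^{(r)} \in \R^n$, and set $Z_r := \{k : x^{(r)}_k = 0\}$. The requirement that each $e_i$ lies in $H_1 \cup H_2$ forces $Z_1 \cup Z_2 = \{1, \ldots, n\}$. If there existed $i \in Z_1 \setminus Z_2$ and $j \in Z_2 \setminus Z_1$, then $\langle x^{(1)}, e_i + e_j\rangle = x^{(1)}_j \neq 0$ and $\langle x^{(2)}, e_i + e_j\rangle = x^{(2)}_i \neq 0$, so $e_i + e_j$ escapes both hyperplanes, a contradiction. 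Hence $Z_1 \subseteq Z_2$ or $Z_2 \subseteq Z_1$, which combined with $Z_1 \cup Z_2 = \{1, \ldots, n\}$ forces one of the $x^{(r)}$ to vanish, again a contradiction.

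For (ii), two cases suffice. When $v = e_k$, I would take $H_1 := e_k^\perp$, which contains every remaining $e_i$ and every remaining $e_i + e_j$ with $i, j \neq k$; the leftover $n-1$ vectors $\{e_k + e_j\}_{j \neq k}$ are straightforwardly seen to be orthogonal to $e_k - \sum_{j \neq k} e_j$, so they lie in the hyperplane $H_2$ defined by that normal. When $v = e_p + e_q$, I would take $H_1 := e_p^\perp$ and $H_2 := e_q^\perp$; each $e_i$ misses at least one of the indices $p, q$ and lands in the corresponding hyperplane, and each surviving $e_i + e_j$ (necessarily with $\{i,j\} \neq \{p,q\}$) avoids at least one of $p, q$, so lies in $H_1 \cup H_2$.

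The hard part is the combinatorial step in (i); once the structure $Z_1 \cup Z_2 = \{1,\ldots,n\}$ is in place, the edge-like constraint contributed by the $e_i + e_j$ vectors forces the desired contradiction. Step (ii) is then essentially a matter of guessing the right hyperplane pair for each deletion and verifying coverage type by type.
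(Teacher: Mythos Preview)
Your proof is correct and takes a genuinely different route from the paper's. The paper argues algebraically via the measurement map $\Phi_{\mathcal{F}_\R}:\S_n\to\R^m$: it observes that the rank-one operators $\{e_ke_k^T\}\cup\{(e_k+e_l)(e_k+e_l)^T\}$ form a basis of $\S_n$, so the map is injective (hence the frame is IC and in particular PSI-Complete); then, for each deleted vector, it exhibits an explicit rank-$2$ indefinite operator spanning the resulting one-dimensional kernel. You instead work entirely through the Complement Property, recast as ``not covered by two hyperplanes,'' and handle both PSI-completeness and vitality by direct combinatorial arguments on supports. The two approaches are in fact dual: if the paper's witness is $A=vv^T-uu^T$, the corresponding hyperplane pair is $(v-u)^\perp$, $(v+u)^\perp$, and your choices $\bigl(e_k^\perp,\ (e_k-\sum_{j\neq k}e_j)^\perp\bigr)$ and $(e_p^\perp,\,e_q^\perp)$ match the paper's witnesses under this correspondence. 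Your argument is more elementary and self-contained (no appeal to $\S_n$ or to the cited fact that those operators form a basis), while the paper's approach makes the stronger IC conclusion explicit and transfers more directly to the complex case in Proposition~\ref{prop.complex_sharp}, where the Complement Property is no longer available.
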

\begin{proof}
The size of the frame is $n + \binom{n}{2} = \binom{n+1}{2}$. We examine the corresponding measurement operators: $\{E_k = e_k e_k^T\}_{1 \leq k \leq n}$ and $\{E_{kl} = (e_k + e_l)(e_k + e_l)^T\}_{1 \leq k < l \leq n}$. It is known that these operators form a basis for $\S_n$ (see, e.g.,~\cite{balan2009painless}). Therefore, the measurement map $\Phi_{\mathcal{F}_{\R}}$ is injective on $\S_n$ (i.e., it is Informationally Complete), which implies the frame is PSI-Complete.

Since the measurement is IC and the size equals the dimension of $\S_n$ (Minimally IC), removing any vector $w$ results in a measurement map $\Phi_{\mathcal{F}_{\R} \setminus \{w\}}$ whose kernel is $1$-dimensional. To show vitality, we must show that for every $w$, this kernel is spanned by a rank $2$ operator   that can be written as $vv^T -  uu^T$.

\medskip
\noindent\textbf{Case 1: Delete $e_i$.}
Set
\[
v:=\sum_{m\ne i} e_m,\qquad  u:=v-2e_i,
\]
and define $
A:=vv^T-uu^T.
$
Since \(uu^T=(v-2e_i)(v-2e_i)^T = vv^T-2ve_i^T-2e_iv^T+4e_ie_i^T\), we have
\[
A=vv^T-uu^T = vv^T-\bigl(vv^T-2ve_i^T-2e_iv^T+4e_ie_i^T\bigr)
=2ve_i^T+2e_iv^T-4e_ie_i^T.
\]
Thus the entries of \(A\) are
\[
A_{rc}=\begin{cases}
-4, & r=c=i,\\[3pt]
2, & r=i,\ c\ne i,\\[3pt]
2, & r\ne i,\ c=i,\\[3pt]
0, & \text{otherwise},
\end{cases}
\]
and clearly $A$ has rank $2$.

Let $j\ne i$ then, 
\[
\langle A, E_j\rangle
=e_j^T A e_j = A_{jj}=0.
\]
Next, for  \(k\ne \ell\),
\[
\langle A,E_{k\ell}\rangle
=(e_k+e_\ell)^T A (e_k+e_\ell)
=A_{kk}+A_{\ell\ell}+2A_{k\ell}.
\]
There are two cases:

\smallskip
\emph{(1) Neither index is \(i\).} Then \(A_{kk}=A_{\ell\ell}=A_{k\ell}=0\), so \(\langle A,E_{k\ell}\rangle=0\).

\smallskip
\emph{(2) Exactly one index is \(i\).} WLOG take \(k=i\), \(\ell\ne i\). Then
\[
A_{ii}=-4,\qquad A_{\ell\ell}=0,\qquad A_{i\ell}=2
\]
and hence
\[
\langle A,E_{i\ell}\rangle = A_{ii}+A_{\ell\ell}+2A_{i\ell} = (-4)+0+2\cdot 2 = 0.
\]
The case \(k\ne i,\ \ell=i\) is identical by symmetry.

\medskip
Therefore, for this explicit rank-two symmetric matrix \(A\),
\[
\langle A,E_j\rangle=0\quad\text{for all }j\ne i,\qquad
\langle A,E_{k\ell}\rangle=0\quad\text{for all }k\ne \ell.
\]

\medskip
\noindent\textbf{Case 2: Delete $e_i+e_j$.}
We seek $A_{ij} \in \S_n$ orthogonal to $\{E_k\}_k$ and $\{E_{kl}\}_{(k,l)\neq(i,j)}$.
$\langle A_{ij}, E_k \rangle = (A_{ij})_{kk} = 0$ for all $k$.
$\langle A_{ij}, E_{kl} \rangle = (A_{ij})_{kk} + (A_{ij})_{ll} + 2(A_{ij})_{kl} = 2(A_{ij})_{kl} = 0$ for $(k,l)\neq(i,j)$.
Thus, $A_{ij}$ is zero everywhere except the $(i,j)$ and $(j,i)$ entries. Let $A_{ij} = e_i e_j^T + e_j e_i^T$. This matrix is indefinite (eigenvalues $\pm 1$) and has rank 2.

In both cases, the kernel contains a rank-2 indefinite operator, confirming the frame is vital.
\end{proof}
\begin{proposition} \label{prop.complex_sharp}
The complex frame $\mathcal{F}_{\C} = \left\{e_1, \ldots , e_n\right\} \cup \left\{e_k + e_l\right\}_{1 \leq k < l \leq n} \cup \left\{e_k + i e_l\right\}_{1 \leq k < l \leq n}$ in $\C^n$ is vital.
\end{proposition}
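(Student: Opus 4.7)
The plan is to mirror the proof of Proposition~\ref{prop.real_sharp}. The size of $\mathcal{F}_\C$ is $n + 2\binom{n}{2} = n^2 = \dim_\R \H_n$, matching the upper bound from Theorem~\ref{thm.bounds}, so it suffices to show that the $n^2$ associated rank-one Hermitian operators form a basis of $\H_n$ and that the one-dimensional kernel obtained by removing any one of them contains an indefinite rank-$2$ Hermitian operator. The spanning step is routine: $E_k = e_ke_k^*$ recovers the diagonal entries, $E_{kl} - E_k - E_l = e_ke_l^* + e_le_k^*$ spans the real off-diagonal directions, and $E_{kl}' - E_k - E_l = i(e_le_k^* - e_ke_l^*)$ spans the imaginary off-diagonal directions. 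Hence $\Phi_{\mathcal{F}_\C}$ is informationally complete on $\H_n$, and in particular PSI-Complete.

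For the vitality half, by Corollary~\ref{stam_cor} it suffices to exhibit, in each one-dimensional kernel obtained by removing one operator, a rank-$2$ indefinite Hermitian $A$: such an $A$ decomposes as $c_1 p_1 p_1^* - c_2 p_2 p_2^*$ with $c_1, c_2 > 0$, which rewrites as $A/c_1 = \rho - \alpha\sigma$ for distinct pure states $\rho, \sigma$ and $\alpha > 0$, giving the proportional unnormalized measurement vectors required by the corollary. The orthogonality equations to the remaining operators translate entry-wise into $\langle A, E_k\rangle = A_{kk}$, $\langle A, E_{kl}\rangle = A_{kk}+A_{ll}+2\operatorname{Re}(A_{kl})$, and $\langle A, E_{kl}'\rangle = A_{kk}+A_{ll}-2\operatorname{Im}(A_{kl})$. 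With these, the two off-diagonal removals are handled immediately: removing $e_i+e_j$ is killed by $A = e_ie_j^* + e_je_i^*$, and removing $e_i+ie_j$ is killed by $A = i(e_ie_j^* - e_je_i^*)$, each of rank $2$ with eigenvalues $\pm 1$.

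The main obstacle, as in the real case, is the removal of a standard basis vector $e_i$. Here the constraints force $A_{im}$ to be genuinely complex for every $m \ne i$, and Hermiticity together with the ordering convention $k < l$ in the definitions of $E_{kl}, E_{kl}'$ imposes that the conjugate structure of $A_{im}$ flips depending on whether $m > i$ or $m < i$. The candidate I would propose is $A = e_i u^* + u e_i^*$ with
\[
u \;=\; -e_i \;+\; (1+i)\sum_{m>i} e_m \;+\; (1-i)\sum_{m<i} e_m .
\]
Hermiticity and $\operatorname{rank}(A) \le 2$ are automatic from the outer-product form, and a short entry-by-entry check verifies all the orthogonality conditions. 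Computing $\Tr(A) = -2$ and $\Tr(A^2) = 4n$ yields a product of nonzero eigenvalues equal to $-2(n-1) < 0$ for $n \ge 2$, giving indefiniteness. The identity $A = \tfrac{1}{2}(e_i+u)(e_i+u)^* - \tfrac{1}{2}(e_i-u)(e_i-u)^*$ then exhibits the explicit $\rho - \alpha\sigma$ decomposition, completing all three cases and establishing vitality.
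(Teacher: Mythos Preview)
Your proof is correct and follows exactly the approach the paper sketches: it cites \cite{balan2009painless} for the basis property and then says only that ``the arguments are very similar to the real case,'' omitting the explicit kernel elements. You have supplied those details, including the nontrivial one the paper skips entirely---the choice of $u$ in the $e_i$-removal case, where the ordering convention $k<l$ forces the conjugate of $u_m$ to flip across $m=i$. Your entry-wise verifications and the eigenvalue computation via $\Tr A$ and $\Tr A^2$ are all accurate.

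One small simplification: you route the vitality conclusion through Corollary~\ref{stam_cor} (the POVM proportionality criterion), but since the paper establishes in its ``Equivalence of Properties'' remark that vitality for the frame and for the whitened POVM coincide, it is enough to argue directly at the frame level, exactly as the paper does in the real case. Namely, any rank-$2$ indefinite Hermitian $A$ in the reduced kernel can be written $A=xx^*-yy^*$ with $x,y$ linearly independent, and then $|\langle w,x\rangle|^2=|\langle w,y\rangle|^2$ for every remaining frame vector $w$, so the reduced frame fails phase retrieval. This avoids normalization and the passage through $\rho-\alpha\sigma$, though your route is also valid.
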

\begin{proof}
The size of the frame is $n + 2\binom{n}{2} = n^2$. The $n^2$ corresponding Hermitian measurement operators $ vv^*$ are known to form a basis for $\H_n$~\cite{balan2009painless}. Therefore, the measurement map $\Phi_{\mathcal{F}_{\C}}$ is injective, and the frame is PSI-Complete.

For vitality we need to show that for each $v \in \mathcal{F}_{\C}$, we construct an operator $A_v \in \ker \Phi_{\mathcal{F}_{\C} \setminus \{v\}}$ of rank $2$ that  can be written as $x x^* - y y^*$. We omit the details, as the arguments are very similar to the real case. 
\end{proof}

\subsection{Connection to Informational Completeness (IC)}
The following corollary shows that the POVMs constructed above are IC.
\begin{corollary}
The vital rank-one POVMs constructed in Proposition~\ref{prop.real_sharp} and Proposition~\ref{prop.complex_sharp} are Informationally Complete (IC).
\end{corollary}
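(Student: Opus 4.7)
The corollary is essentially a restatement of what was already observed inside the proofs of Propositions~\ref{prop.real_sharp} and~\ref{prop.complex_sharp}, so the plan is just to extract it cleanly. Recall that by definition $\mathcal{E}$ is IC precisely when the measurement map $\Phi_{\mathcal{E}}$ is injective on the whole operator space ($\S_n$ in the real case, $\H_n$ in the complex case). Since $\Phi_{\mathcal{E}}(A)=(\langle A,E_i\rangle)_i$ is built from the Hilbert--Schmidt inner product, which is non-degenerate on $\S_n$ and $\H_n$, its kernel is exactly the orthogonal complement of $\Span\{E_i\}$ inside the operator space. Hence $\Phi_{\mathcal{E}}$ is injective if and only if $\{E_i\}$ spans that operator space.

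Using this, the plan is to apply the spanning statement that each proposition's proof already invoked from~\cite{balan2009painless}: the $\binom{n+1}{2}$ operators $\{e_ke_k^T,(e_k+e_\ell)(e_k+e_\ell)^T\}$ form a basis of $\S_n$, and the $n^2$ operators $\{e_ke_k^*,(e_k+e_\ell)(e_k+e_\ell)^*,(e_k+ie_\ell)(e_k+ie_\ell)^*\}$ form a basis of $\H_n$. In both cases the size of the POVM equals the dimension of the target operator space and the elements form a basis, so in particular they span, and the observation above yields injectivity of $\Phi_{\mathcal{E}}$ on all of $\S_n$ (respectively $\H_n$). This is exactly the IC property.

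There is no genuine obstacle here: the whole content of the corollary is repackaging the basis claim already used to establish PSI-Completeness inside the two propositions. The only thing worth emphasizing in the write-up is the logical point that, at the minimal IC size, spanning, basis, and injectivity of $\Phi_{\mathcal{E}}$ all coincide, which is why the same computation simultaneously gives PSI-Completeness, IC, and (in combination with the kernel analysis for each removed vector) vitality. I would therefore keep the proof to two or three sentences, simply citing Propositions~\ref{prop.real_sharp} and~\ref{prop.complex_sharp} together with the Hilbert--Schmidt duality argument.
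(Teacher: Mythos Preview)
Your proposal is correct and follows essentially the same approach as the paper: both argue that since the measurement operators were already shown in the proofs of Propositions~\ref{prop.real_sharp} and~\ref{prop.complex_sharp} to form a basis of $\S_n$ (resp.\ $\H_n$), the measurement map $\Phi_{\mathcal{E}}$ is injective on the full operator space, which is the definition of IC. The only difference is that you spell out the Hilbert--Schmidt duality step (spanning $\Leftrightarrow$ trivial kernel of $\Phi_{\mathcal{E}}$), which the paper leaves implicit.
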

\begin{proof}
In the proofs of both propositions, it was established that the measurement operators $\{v v^T\}_{v\in\mathcal{F}_{\R}}$ (resp. $\{v v^*\}_{v\in\mathcal{F}_{\C}}$) form a basis for the respective operator spaces ($\S_n$ or $\H_n$). By definition, this means the measurement map $\Phi_{\mathcal{F}}$ is injective on the entire space, and thus the measurement is IC.
\end{proof}

\begin{remark}
A measurement is called \emph{Minimally IC} if it is IC and its size equals the dimension of the operator space. The constructions above are therefore also minimally IC. 

However, it is not generally true that a Minimally IC measurement is vital in the PSI-Complete sense. This is because minimally IC measurement requires minimality with respect to the property of being injective on the entire operator space, whereas vitality in the sense of PSI-Complete requires minimality with regard to being injective only on the set of rank-$1$ operators. 
\end{remark}

\section{Construction of vital rank-one POVMs from Block Designs} \label{sec.design}
In this section, we provide a method for constructing vital rank-one POVMs in real Hilbert spaces using combinatorial designs.
This adds to an existing, fruitful line of research using algebraic designs, such as unitary $t$-designs, for characterizing quantum systems \cite{10.1063/1.2716992}.

Our analysis will diverge from the algebraic characterization given in 
Corollary \ref{stam_cor} and use a more concrete geometric condition that is specific to the real case, known as the  \emph{Complement Property} \cite{balan2006signal}. We will use this condition to prove that the constructed rank-one POVMs are vital. 
\begin{definition}[Complement Property]
A frame $\mathcal{F} \subset \R^n$ satisfies the \emph{Complement Property} if for every partition of $\mathcal{F}$ into two subsets $\mathcal{F}_1 \cup \mathcal{F}_2$, at least one of the subsets spans $\R^n$.
\end{definition}
The following result connects the complement property and  PSI-Completeness.
\begin{theorem}
\label{thm-psi-complete-complememt-property}
    [see \cite{balan2006signal}]
    A frame $\mathcal{F}=\{v_i\}_{i=1}^m \subset \R^n$ permits phase retrieval (equivalently,  its associated  POVM is PSI-Complete) if and only if it satisfies the Complement Property.
\end{theorem}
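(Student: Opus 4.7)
The plan is to translate the phase retrieval condition into a factored bilinear condition, and then exploit the substitution $u=\psi-\phi$, $w=\psi+\phi$ to match it with a partition of $\mathcal{F}$. In the real case, $|\langle v_i,\psi\rangle|^2=|\langle v_i,\phi\rangle|^2$ is equivalent to the factored identity $\langle v_i,\psi-\phi\rangle\langle v_i,\psi+\phi\rangle=0$, so for every index $i$ at least one of $\langle v_i,u\rangle=0$ or $\langle v_i,w\rangle=0$ holds. Since $u=w=0$ characterises $\psi=\phi$ and $\psi=-\phi$ simultaneously is impossible (the map $(\psi,\phi)\mapsto(u,w)$ is a linear isomorphism), phase retrieval amounts to showing that whenever every $v_i$ annihilates $u$ or $w$, one of $u,w$ must be $0$.

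For the $(\Leftarrow)$ direction, I assume the Complement Property and take any $\psi,\phi$ with equal measurement statistics. I would define $\mathcal{F}_1:=\{v_i:\langle v_i,u\rangle=0\}$ and $\mathcal{F}_2:=\mathcal{F}\setminus\mathcal{F}_1$, which by the factored identity is contained in $\{v_i:\langle v_i,w\rangle=0\}$. The Complement Property forces one of the two subsets to span $\R^n$; if $\mathcal{F}_1$ spans then $u\perp\R^n$ so $u=0$, and if $\mathcal{F}_2$ spans then $w=0$. Either way $\psi=\pm\phi$, giving phase retrieval.

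For the $(\Rightarrow)$ direction, I would argue by contrapositive. Suppose the Complement Property fails, so there is a partition $\mathcal{F}=\mathcal{F}_1\sqcup\mathcal{F}_2$ with neither part spanning. Choose nonzero vectors $u\in\mathcal{F}_1^\perp$ and $w\in\mathcal{F}_2^\perp$, and set $\psi:=(u+w)/2$, $\phi:=(w-u)/2$, so that $\psi-\phi=u$ and $\psi+\phi=w$. For $v_i\in\mathcal{F}_1$ the factor $\langle v_i,u\rangle$ vanishes, and for $v_i\in\mathcal{F}_2$ the factor $\langle v_i,w\rangle$ vanishes; in both cases $\langle v_i,\psi\rangle^2=\langle v_i,\phi\rangle^2$. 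Since $u$ and $w$ are both nonzero, $\psi\neq\pm\phi$, contradicting phase retrieval.

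The only subtlety is to ensure the constructed counterexample $(\psi,\phi)$ is genuine, i.e.\ that $u$ and $w$ are both nonzero and that $\psi\neq\pm\phi$ — but this is automatic once neither $\mathcal{F}_1$ nor $\mathcal{F}_2$ spans $\R^n$, since the nonspanning condition exactly produces nonzero orthogonal vectors. No step is really hard; the main idea is simply to recognise that the $\pm$ sign ambiguity in the real case is precisely a two-block partitioning phenomenon via the identity $a^2-b^2=(a-b)(a+b)$.
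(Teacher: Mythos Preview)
The paper does not supply its own proof of this theorem; it is quoted as a known result with a citation to Balan--Casazza--Edidin, so there is nothing to compare against directly. Your argument is correct and is essentially the standard proof from that reference: the factorisation $a^2-b^2=(a-b)(a+b)$ converts equal intensities into a per-index dichotomy, and the Complement Property is exactly the combinatorial condition that rules out nontrivial solutions.

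One small point you glossed over in the $(\Rightarrow)$ direction: you verified $\psi\neq\pm\phi$, but for $(\psi,\phi)$ to be a genuine counterexample to phase retrieval you also need $\psi\neq 0$ and $\phi\neq 0$. This is automatic, since $\psi=0$ forces $u=-w$ and $\phi=0$ forces $u=w$; in either case $u$ lies in $\mathcal{F}_1^\perp\cap\mathcal{F}_2^\perp=(\operatorname{Span}\mathcal{F})^\perp=\{0\}$ because $\mathcal{F}$ is a frame, contradicting $u\neq 0$. With that remark added, the proof is complete.
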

By Theorem \ref{thm-psi-complete-complememt-property}, we have the following characterization of vital rank-one POVMs in the real case
\begin{corollary}
    A frame $\mathcal{F}=\{v_i\}_{i=1}^m \subset \R^n$ permits phase retrieval and is vital (equivalently,  its associated  POVM is a vital rank-one POVM) if and only if it satisfies the Complement Property, but any subcollection of $\mathcal{F}$ does not.
\end{corollary}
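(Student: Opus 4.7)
The plan is to read the statement as a direct translation, via Theorem \ref{thm-psi-complete-complememt-property}, of the definition of a vital frame into the language of the Complement Property. Nothing deeper is needed: both directions reduce to applying the Balan--Casazza--Edidin characterization to the frame and to each single-vector deletion, combined with a one-line monotonicity observation about CP.

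First I would unpack ``vital'' according to the definition carried over to frames: $\mathcal{F}$ permits phase retrieval and, for every index $j$, the subcollection $\mathcal{F}\setminus\{v_j\}$ does not permit phase retrieval. Applying Theorem \ref{thm-psi-complete-complememt-property} to $\mathcal{F}$ replaces ``permits phase retrieval'' by ``satisfies CP,'' and applying the same theorem to each deleted subcollection replaces ``does not permit phase retrieval'' by ``does not satisfy CP.'' So at this point the corollary reduces to showing that
\[
\bigl(\text{CP holds for }\mathcal{F}\bigr)\ \wedge\ \bigl(\forall j:\ \text{CP fails for }\mathcal{F}\setminus\{v_j\}\bigr)
\]
is equivalent to
\[
\bigl(\text{CP holds for }\mathcal{F}\bigr)\ \wedge\ \bigl(\text{CP fails for every proper subcollection of }\mathcal{F}\bigr).
\]

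Next I would dispatch this equivalence using downward monotonicity of ``CP fails.'' The $(\Leftarrow)$ direction is trivial because each $\mathcal{F}\setminus\{v_j\}$ is itself a proper subcollection. For $(\Rightarrow)$, suppose $\mathcal{G}\subsetneq\mathcal{F}$ and fix some $v_j\in\mathcal{F}\setminus\mathcal{G}$, so that $\mathcal{G}\subseteq\mathcal{F}\setminus\{v_j\}$. By hypothesis there is a partition $\mathcal{F}\setminus\{v_j\}=\mathcal{A}_1\sqcup\mathcal{A}_2$ with neither $\mathcal{A}_1$ nor $\mathcal{A}_2$ spanning $\R^n$. Restricting to $\mathcal{G}$ gives a partition $\mathcal{G}=(\mathcal{G}\cap\mathcal{A}_1)\sqcup(\mathcal{G}\cap\mathcal{A}_2)$ whose two parts are contained in non-spanning sets and hence themselves do not span, witnessing that CP fails for $\mathcal{G}$.

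There is no real obstacle: the content of the corollary is entirely in Theorem \ref{thm-psi-complete-complememt-property}, and the only bookkeeping is the monotonicity observation, which is essentially one line. The one stylistic point worth flagging in the write-up is making clear that in the corollary ``any subcollection'' is read as ``any proper subcollection,'' since otherwise $\mathcal{F}$ itself would be required to fail CP, contradicting the first clause.
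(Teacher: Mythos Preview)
Your proposal is correct and matches the paper's approach: the paper gives no explicit proof, stating the corollary as an immediate consequence of Theorem~\ref{thm-psi-complete-complememt-property}. Your write-up simply makes explicit the one piece of bookkeeping the paper leaves implicit---the downward monotonicity of ``CP fails,'' needed to pass from single-vector deletions (as in the definition of vital) to arbitrary proper subcollections (as in the corollary's phrasing)---and your reading of ``any subcollection'' as ``any proper subcollection'' is the intended one.
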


\subsection{Block Designs and Associated Frames}

\begin{definition}[Block Design]
An $(n,k,\lambda)$-design (or $2$-$(n,k,\lambda)$ design) is a pair $(X, \mathcal{B})$, where $X=[n]=\{1,\ldots,n\}$ is a set of points, and $\mathcal{B} = \{B_1, \ldots, B_b\}$ is a collection of subsets of $X$ (called blocks), such that every block has size $k$, and every pair of distinct points is contained in exactly $\lambda$ blocks.
\end{definition}

We are interested in a specific class of designs known as quasi-residual designs (see, e.g.,~\cite{ColbournDinitz2006}), which correspond to the parameters $(n, w, w-1), w\geq 2$. Standard counting arguments show that in such a design, every point is contained in $r = n-1$ blocks, and the total number of blocks is $b = \frac{n(n-1)}{w}$. This implies that such designs exist only if $w$ divides $n(n-1)$.

We now define the family of frames associated with a block design.

\begin{cnst}[Frames from Block Designs]
\label{main-construction}
Let $\mathcal{B}$ be an $(n,k,\lambda)$-design. For each block $B_i$, let $\mathbb{A}_i \subset \R^n$ be the subspace of vectors whose support lies in $B_i$ ($\dim \mathbb{A}_i = k$). We define a family of frames parameterized by vectors $v_i \in \mathbb{A}_i$:
\[
\mathcal{F} = \{e_1, \ldots, e_n\} \cup \{v_1, \ldots , v_b\},
\]
where $\{e_j\}$ are the standard basis vectors.
\end{cnst}

This construction defines a family of $ kb$-dimensional frames. Next, we state and prove the main theorem of this section, showing that a generic choice of the vectors $v_i$ yields a vital frame, provided the underlying design is an $(n,w,w-1)$ design. 
Equivalently, the frame gives rise to a vital rank-one POVM. 

\begin{theorem} \label{thm.blockdesign}
Let $\mathcal{B}$ be an $(n,w,w-1)$ block design with $w\geq 2$. If the vectors $v_i \in \mathbb{A}_i$ in Construction \ref{main-construction} are chosen generically, then the resulting frame $\mathcal{F}$ is vital. Equivalently, it gives rise to a vital rank-one POVM of size $n + n(n-1)/w$.
\end{theorem}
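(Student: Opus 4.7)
The plan is to analyze both directions through the Complement Property (Theorem~\ref{thm-psi-complete-complememt-property}): a frame in $\R^n$ permits phase retrieval precisely when no two hyperplanes $H_1,H_2\subset\R^n$ together cover it, so $\mathcal{F}$ is vital iff no such covering exists for $\mathcal{F}$, but for every $v^*\in\mathcal{F}$ one does exist for $\mathcal{F}\setminus\{v^*\}$ (and necessarily misses $v^*$). Writing $\eta_\ell$ for the normal of $H_\ell$ and $J_\ell:=\supp(\eta_\ell)$, the entire argument reduces to combinatorial bookkeeping on the supports.

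To show the Complement Property for $\mathcal{F}$, assume toward contradiction that $H_1,H_2$ cover $\mathcal{F}$. Since every $e_j$ lies in $H_1\cup H_2$, we must have $J_1\cap J_2=\emptyset$. For generic $v_i\in\mathbb{A}_i$, the condition $v_i\in H_\ell$ forces $\mathbb{A}_i\subset H_\ell$, i.e.\ $B_i\cap J_\ell=\emptyset$, so every block $B_i$ must be disjoint from $J_1$ or from $J_2$. But for any $p\in J_1$ and $q\in J_2$ (which exist since the supports are nonempty and disjoint), the design property produces $\lambda=w-1\geq 1$ blocks containing $\{p,q\}$; each of these meets both $J_1$ and $J_2$, a contradiction.

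For vitality, fix $v^*\in\mathcal{F}$ and choose a coordinate $p$ as follows: $p=j$ when $v^*=e_j$, and any $p\in B_{i_0}$ when $v^*=v_{i_0}$. Set $H:=\{x\in\R^n:x_p=0\}$. The frame vectors outside $H$ are precisely $e_p$ and $\{v_i:p\in B_i\}$, giving $1+r=1+(n-1)=n$ vectors in total, and by the choice of $p$ (together with $(v_{i_0})_p\neq 0$ for generic $v_{i_0}$) the vector $v^*$ is one of them. The partition $\mathcal{F}\setminus\{v^*\}=(\mathcal{F}\cap H)\sqcup(\mathcal{F}\setminus(H\cup\{v^*\}))$ thus has its first part contained in the hyperplane $H$ and its second part consisting of only $n-1$ vectors of $\R^n$; neither part spans $\R^n$, so $\mathcal{F}\setminus\{v^*\}$ fails the Complement Property.

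The main technical obstacle is the genericity step in paragraph two: the equivalence ``$v_i\in H_\ell \iff \mathbb{A}_i\subset H_\ell$'' is valid for a generic $v_i$ at a \emph{fixed} hyperplane, but as $(\eta_1,\eta_2)$ ranges over a positive-dimensional family one must exclude non-generic coincidences. The clean way to close this is to observe that the failure of the Complement Property is a finite union of closed conditions on $(v_i)\in\prod_i\mathbb{A}_i$ (one per partition of $\mathcal{F}$ into two non-spanning subsets), hence closed; so it suffices to exhibit a single $(v_i)$ for which the argument above goes through verbatim, and openness of PSI-Completeness then extends it to a dense subset. If instead one prefers a direct dimension count, the key quantitative input is a combinatorial lower bound on $|I_{12}|:=|\{i:B_i\cap J_1\neq\emptyset\neq B_i\cap J_2\}|$ relative to $|J_1|+|J_2|$, coming from the identity $\sum_{B\in I_{12}}|B\cap J_1|\,|B\cap J_2|=|J_1||J_2|(w-1)$, which forces the bad locus in each support stratum to have positive codimension.
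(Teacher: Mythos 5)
Your third paragraph (vitality, given the Complement Property) is correct, and it is essentially the paper's own argument: since every point of an $(n,w,w-1)$ design lies in exactly $r=n-1$ blocks, exactly $n$ frame vectors have nonzero $p$-th coordinate, so deleting any $v^*$ leaves a partition of $\mathcal{F}\setminus\{v^*\}$ into a subset of the hyperplane $\{x_p=0\}$ and a set of $n-1$ vectors, neither of which spans; this is Lemmas \ref{lem.complement} and \ref{prop1} of the paper with $u_i=e_i$. The genuine gap is in your second paragraph, i.e., in establishing the Complement Property for $\mathcal{F}$ itself, and it is not the repairable technicality you describe. The step ``for generic $v_i\in\mathbb{A}_i$, $v_i\in H_\ell$ forces $\mathbb{A}_i\subset H_\ell$'' has the quantifiers in the wrong order: the hyperplanes $H_1,H_2$ are chosen after, and as functions of, the tuple $(v_i)$, whereas genericity of a point only lets it avoid hyperplanes fixed in advance. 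Here is a demonstration that the flaw is fatal rather than cosmetic: your argument uses nothing about the design beyond $\lambda\ge 1$ (every pair of points lies in some block), so if it were sound it would equally prove the Complement Property for the frame arising from the $2$-$(n,n,1)$ design whose single block is $[n]$, namely $\mathcal{F}=\{e_1,\ldots,e_n,v\}$ with $v$ generic. For $n\ge 3$ that frame fails the Complement Property outright: partition it into $\{e_1,v\}$ and $\{e_2,\ldots,e_n\}$, and neither part spans $\R^n$. A correct proof must therefore use the hypothesis $\lambda=w-1$ quantitatively, which your second paragraph never does.

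Neither of your proposed repairs supplies the missing content. Repair (a) is sound as a reduction---failure of the Complement Property is a finite union of Zariski-closed rank conditions over partitions, so exhibiting one good tuple implies genericity---but it is circular in your hands: you propose to certify the good tuple by running ``the argument above verbatim,'' yet that argument's key implication is false pointwise for every tuple (when $w\ge 2$, any fixed $v_i$ lies on many hyperplanes not containing $\mathbb{A}_i$), and producing an explicit tuple satisfying the Complement Property is genuinely hard: the paper itself remarks that even for $w=n-1$ the natural explicit choice $v_i=\sum_{j\ne i}e_j$ fails it for $n\ge 4$. Repair (b) starts from the correct double count---it is exactly the paper's identity \eqref{eq:pair-count}---but everything after it is the actual proof, and it is missing: one needs the inequality $|N(S)\cap N(T)|\ge |S|+|T|-1$ for all disjoint nonempty $S,T$ (Proposition \ref{prop.combinatorial_condition}, a nontrivial three-case analysis), plus a mechanism converting this combinatorial bound into spanning statements about generic vectors with prescribed supports. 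The paper does that via the criterion of Lemma \ref{lem.complement}, a bound on how many block vectors meeting $\supp(u)$ can be orthogonal to $u$ (Claim 1 in the proof of Lemma \ref{lemma-main}, which is where genericity enters legitimately, through the matching argument of Lemma \ref{good-lemma}), and then a Hall's-theorem matching to complete a basis inside $\mathcal{F}_u$. None of this can be waved through as ``positive codimension in each stratum'': the stratification, the dimension count, and the design inequality all have to be carried out, and together they constitute the bulk of the paper's Section \ref{sec.design}.
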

\begin{remark}[Implications for Structured QST and Sparsity]
\label{remark:sparsity}Theorem~\ref{thm.blockdesign} provides a systematic method for designing efficient tomography experiments for Rebits (real quantum states). The construction reveals a trade-off, parameterized by $w$, between the total number of measurement outcomes ($m=n+n(n-1)/w$) and the complexity (support size $w$) of the structured measurements $\{v_i\}$.

When $w=n$ (Minimal Size, $m=2n-1$), the number of measurements is minimized, but the vectors $v_i$ are dense (full support). Conversely, when $w=2$ (Maximal Size, $m=\binom{n+1}{2}$), the vectors $v_i$ are 2-sparse. Such sparse measurements might be easier to implement physically. This framework offers a spectrum of structured, vital rank-one POVMs between these extremes, guaranteed by combinatorial properties.
\end{remark}
\subsection{Proof of Theorem~\ref{thm.blockdesign}}
To prove vitality, we will prove that the  Complement Property holds due to the combinatorial structure of the design. We will need the following two simple lemmas regarding the Complement Property in $\R^n$.

\begin{lemma} \label{lem.complement}
A frame $\mathcal{F} \subset \R^n$ satisfies the Complement Property if and only if for every nonzero $u\in\R^n$, the set
\[
\mathcal{F}_u := \{ v \in \mathcal{F} : \langle u, v \rangle \neq 0 \}
\]
spans $\mathbb{R}^n$.
\end{lemma}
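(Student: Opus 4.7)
The plan is to prove both implications by reformulating the Complement Property in terms of hyperplanes. The key observation is that a subset $\mathcal{G} \subseteq \R^n$ fails to span $\R^n$ if and only if there exists a nonzero vector $u \in \R^n$ with $\mathcal{G} \subseteq u^\perp$, i.e., $\langle u, v \rangle = 0$ for all $v \in \mathcal{G}$. Both directions then reduce to choosing the right hyperplane or the right partition.

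For the forward direction ($\Rightarrow$), I would assume the Complement Property holds, fix an arbitrary nonzero $u \in \R^n$, and consider the specific partition
\[
\mathcal{F} = (\mathcal{F} \cap u^\perp)\ \cup\ \mathcal{F}_u,
\]
where the first piece collects the vectors orthogonal to $u$ and the second piece is its complement in $\mathcal{F}$. The first piece is contained in the hyperplane $u^\perp$, so it cannot span $\R^n$. By the Complement Property, the second piece $\mathcal{F}_u$ must span $\R^n$, which is exactly what we needed.

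For the backward direction ($\Leftarrow$), I would argue by contrapositive. Suppose the Complement Property fails; then there exists a partition $\mathcal{F} = \mathcal{F}_1 \cup \mathcal{F}_2$ with neither subset spanning $\R^n$. Pick any nonzero $u$ with $\mathcal{F}_1 \subseteq u^\perp$. Then every vector in $\mathcal{F}_u$ has nonzero inner product with $u$, so in particular $\mathcal{F}_u \cap \mathcal{F}_1 = \emptyset$, which forces $\mathcal{F}_u \subseteq \mathcal{F}_2$. Since $\mathcal{F}_2$ does not span $\R^n$, the smaller set $\mathcal{F}_u$ does not span either, contradicting the hypothesis.

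There is no substantive obstacle here; the lemma is essentially a rephrasing, and the only subtlety is correctly identifying that "not spanning" is equivalent to "being contained in some hyperplane $u^\perp$" and then matching up the two sides via the same $u$. Once that dictionary is in place, both implications follow in one line each.
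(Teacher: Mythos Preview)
Your proof is correct and follows essentially the same approach as the paper: both use the partition $\mathcal{F} = (\mathcal{F}\cap u^\perp)\cup \mathcal{F}_u$ for the forward direction, and for the converse both pick a nonzero $u$ orthogonal to a non-spanning part and observe that $\mathcal{F}_u$ lies in its complement. The only cosmetic difference is that the paper phrases the converse directly (if $\mathcal{F}_1$ fails to span then $\mathcal{F}_2$ spans) whereas you phrase it contrapositively, but the content is identical.
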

\begin{proof}
($\Rightarrow$) Given $u \neq 0$, we partition $\mathcal{F}$ into $\mathcal{F}_u$ and $\mathcal{F} \cap u^\perp$. The latter set is contained in $u^\perp$, so it does not span $\R^n$. By the Complement Property, $\mathcal{F}_u$ must span $\R^n$.

($\Leftarrow$) Suppose $\mathcal{F}$ is partitioned into $\mathcal{F}_1$ and $\mathcal{F}_2$, and $\mathcal{F}_1$ does not span. Let $u \in (\operatorname{Span}\mathcal{F}_1)^\perp$ be nonzero. Then $\mathcal{F}_1 \subset u^\perp$. This implies $\mathcal{F}_2 \supset \mathcal{F}_u$. By hypothesis, $\mathcal{F}_u$ spans $\R^n$, so $\mathcal{F}_2$ also spans $\R^n$.
\end{proof}

\begin{lemma}
\label{prop1}
A frame $\mathcal{F} \subseteq \mathbb{R}^n$ satisfying the Complement Property is vital if there exists a basis $\{u_1, \dots, u_n\}$ of $\mathbb{R}^n$ such that for each $i$, the set $\mathcal{F}_{u_i}$ is a basis of $\mathbb{R}^n$, i.e., $|\mathcal{F}_{u_i}|=n$.
\end{lemma}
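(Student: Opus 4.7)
The plan is to use the characterization of the Complement Property in Lemma~\ref{lem.complement}, namely that $\mathcal{F}$ has the property iff $\mathcal{F}_u$ spans $\mathbb{R}^n$ for every nonzero $u$. Since $\mathcal{F}$ already has the Complement Property, phase retrieval is guaranteed by Theorem~\ref{thm-psi-complete-complememt-property}; what remains is to show that for every $v\in\mathcal{F}$, the subcollection $\mathcal{F}\setminus\{v\}$ loses the Complement Property. Equivalently, for each $v$ I must exhibit a nonzero witness $u\in\R^n$ such that $(\mathcal{F}\setminus\{v\})_u$ fails to span.

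First I would observe that the $n$ given vectors $u_1,\dots,u_n$ form a basis, so for any nonzero $v\in\mathcal{F}$ there must be at least one index $i$ with $\langle u_i,v\rangle\neq 0$; otherwise $v$ would be orthogonal to a basis, forcing $v=0$. Fix such an $i$, so that $v\in\mathcal{F}_{u_i}$. The hypothesis says $|\mathcal{F}_{u_i}|=n$ and $\mathcal{F}_{u_i}$ is a basis of $\R^n$. Now note that
\[
(\mathcal{F}\setminus\{v\})_{u_i}=\{w\in\mathcal{F}\setminus\{v\}:\langle u_i,w\rangle\neq 0\}=\mathcal{F}_{u_i}\setminus\{v\},
\]
which is a set of $n-1$ linearly independent vectors and therefore does not span $\R^n$.

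Applying Lemma~\ref{lem.complement} in the contrapositive direction, with the witness $u=u_i$, we conclude that $\mathcal{F}\setminus\{v\}$ fails the Complement Property and, by Theorem~\ref{thm-psi-complete-complememt-property}, fails phase retrieval. Since $v$ was an arbitrary element of $\mathcal{F}$, no vector can be removed without destroying phase retrieval, so $\mathcal{F}$ is vital.

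The argument is essentially a one‑step bookkeeping exercise; there is no real obstacle. The only subtlety worth emphasizing in the write‑up is the initial step that every $v\in\mathcal{F}$ lies in $\mathcal{F}_{u_i}$ for some $i$, which uses crucially that $\{u_1,\dots,u_n\}$ is a basis rather than merely a spanning or testing family. The rest is immediate from the fact that a basis of $\R^n$ loses its spanning property as soon as one element is deleted.
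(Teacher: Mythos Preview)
Your proof is correct and follows essentially the same approach as the paper: both argue that since $\{u_1,\dots,u_n\}$ is a basis, every nonzero $v\in\mathcal{F}$ lies in some $\mathcal{F}_{u_i}$, and then removing $v$ from the basis $\mathcal{F}_{u_i}$ leaves $n-1$ vectors that fail to span, so $\mathcal{F}\setminus\{v\}$ violates the Complement Property by Lemma~\ref{lem.complement}. Your write-up is slightly more explicit about invoking Theorem~\ref{thm-psi-complete-complememt-property} and the contrapositive direction, but the argument is the same.
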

\begin{proof}
Consider any vector $v \in \mathcal{F}$. Since $v \neq 0$, there exists some $u_i$ such that $\langle u_i, v \rangle \neq 0$, so $v \in \mathcal{F}_{u_i}$. By hypothesis, $\mathcal{F}_{u_i}$ is a basis. Let $\mathcal{F}' = \mathcal{F} \setminus \{v\}$. Then $\mathcal{F}'_{u_i} = \mathcal{F}_{u_i} \setminus \{v\}$. This set has size $n-1$ and does not span $\R^n$. By Lemma~\ref{lem.complement}, $\mathcal{F} '$ does not satisfy the Complement Property. Thus, $\mathcal{F}$ is vital.
\end{proof}

We now analyze the structure of the frame constructed from an $(n,w,w-1)$ design in Construction \ref{main-construction}. Let $\mathcal{F} =\{e_1, \ldots, e_n, v_1, \ldots , v_b\}$. Consider the basis vectors $\{e_i\}$. For a given $e_i$, the set $\mathcal{F}_{e_i}$ contains $e_i$. It also contains $v_k$ if and only if $i$ is in the support of $v_k$. Assuming generic nonzero coefficients, this implies that  $i \in B_k$. In an $(n,w,w-1)$ design, each point is contained in $r=n-1$ blocks. Thus, $|\mathcal{F}_{e_i}| = 1 + (n-1) = n$.

If we establish that $\mathcal{F}$ satisfies the Complement Property, then by Lemma~\ref{lem.complement}, $\mathcal{F}_{e_i}$ must span $\R^n$. Since $|\mathcal{F}_{e_i}|=n$, it forms a basis. Consequently, by Lemma~\ref{prop1}, the frame $\mathcal{F}$ must be vital.

Therefore, to prove Theorem~\ref{thm.blockdesign}, it suffices to show that a generic frame $\mathcal{F}$ associated with the design satisfies the Complement Property. We reduce this requirement to a combinatorial condition on the design's incidence graph $G$ (the bipartite graph between points $[n]$ and blocks $\mathcal{B}$).

\begin{lemma}
\label{lemma-main}
Let $\mathcal{B}$ be a block design. For generic vectors $v_i \in \mathbb{A}_i$, the frame $\mathcal{F}$ satisfies the Complement Property if the incidence graph $G$ satisfies the following condition: for any disjoint nonempty subsets $S, T \subseteq [n]$ of points,
\begin{equation}
\label{eq:1}
|N(S)\cap N(T)| \geq |S| + |T| - 1,
\end{equation}
where $N(S)$ denotes the set of neighbors (blocks) of the set $S$.
\end{lemma}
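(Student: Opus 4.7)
The plan is to argue by algebraic geometry: show that the locus of ``bad'' parameters $v = (v_k) \in V := \prod_{k=1}^{b} \mathbb{A}_k$ (those for which $\mathcal{F}$ fails the Complement Property) is a proper Zariski-closed subset. First I would extract the algebraic obstruction. Suppose $\mathcal{F}$ fails the Complement Property; by Lemma \ref{lem.complement} there are nonzero $u, w \in \R^n$ with $w \perp \mathcal{F}_u$. Since $e_i \in \mathcal{F}_u$ for every $i \in \operatorname{supp}(u)$, orthogonality to the standard basis elements forces $S := \operatorname{supp}(u)$ and $T := \operatorname{supp}(w)$ to be disjoint. The remaining condition ``$\langle w, v_k\rangle = 0$ whenever $\langle u, v_k\rangle \neq 0$'' collapses, over $\R$, into the single polynomial system
\[
\langle u, v_k\rangle \cdot \langle w, v_k\rangle = 0, \qquad k = 1, \ldots, b.
\]

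Next, for each pair of disjoint nonempty $S, T \subseteq [n]$, I would form the incidence variety $W(S, T) \subseteq V \times U_S \times U_T$, where $U_S \subseteq \R^n$ is the open subset of vectors with support exactly $S$ (dimension $|S|$) and similarly for $U_T$, cut out by these equations. For $k \notin N(S)$ the linear form $v_k \mapsto \langle u, v_k\rangle$ vanishes identically on $\mathbb{A}_k$, and analogously for $T$, so the $k$-th equation is automatic except when $k \in M := N(S) \cap N(T)$. For such $k$, both $\langle u, \cdot\rangle$ and $\langle w, \cdot\rangle$ are genuinely nonzero linear functionals on $\mathbb{A}_k$ (since $u$ has nonzero entries on $B_k \cap S$ and $w$ on $B_k \cap T$), so their vanishing loci are hyperplanes whose union has codimension $1$ in $\mathbb{A}_k$. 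Because the factors $\mathbb{A}_k$ of $V$ are independent, the fiber of $W(S, T) \to U_S \times U_T$ over every $(u, w)$ has codimension $|M|$ in $V$, yielding
\[
\dim W(S, T) \leq \dim V - |M| + |S| + |T|.
\]

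The system is invariant under independent rescaling $(u, w) \mapsto (\lambda u, \mu w)$ with $\lambda, \mu \in \R^\times$, so every nonempty fiber of the projection $W(S, T) \to V$ has dimension at least $2$. Hence the image of $W(S, T)$ in $V$ has dimension at most $\dim V + |S| + |T| - 2 - |M|$. Under the hypothesis $|M| = |N(S) \cap N(T)| \geq |S| + |T| - 1$, this is strictly less than $\dim V$, so the image is a proper Zariski-closed subset of $V$. Taking the union over the finitely many disjoint nonempty pairs $(S, T)$ preserves this property, and the complement is a Zariski-open dense ``good'' locus, which is exactly the generic set asserted in the lemma.

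The step that will require the most care is the codimension calculation for the fiber of $W(S, T) \to U_S \times U_T$: I need the $|M|$ polynomial equations $\langle u, v_k\rangle \langle w, v_k\rangle = 0$ to cut a subvariety of codimension exactly $|M|$, not something smaller. This reduces to two observations: (i) the equations involve pairwise disjoint blocks of coordinates, namely the independent factors $\mathbb{A}_k$, so codimensions add; and (ii) for $k \in M$ both linear forms restrict to genuinely nonzero functionals on $\mathbb{A}_k$, which follows from the full-support conditions $\operatorname{supp}(u) = S$, $\operatorname{supp}(w) = T$ together with $k \in N(S) \cap N(T)$. Once these are secured, the combinatorial hypothesis matches the dimension count precisely.
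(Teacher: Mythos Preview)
Your approach is correct and genuinely different from the paper's. The paper argues pointwise: for each fixed nonzero $u$ with support $[t]$, it first shows (via an auxiliary lemma about spans of generically supported vectors containing a standard basis vector) that at most $t-1$ of the block vectors whose support meets $[t]$ can be orthogonal to $u$, and then uses condition~\eqref{eq:1} with $T=[t]$ to verify Hall's matching condition and exhibit an explicit spanning subset of $\mathcal{F}_u$. Your argument instead packages the whole failure locus into incidence varieties $W(S,T)$ and reads off the result from a single dimension count, with \eqref{eq:1} entering exactly as the inequality needed to make the image in $V$ drop dimension. Your route is cleaner and avoids the matching machinery and the auxiliary lemma; the paper's route is more constructive and makes the genericity condition more tangible (it pins down, for each $u$, which Zariski-open set of parameters suffices). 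One small imprecision: the projection of $W(S,T)$ to $V$ is only constructible, not closed, so you should say its Zariski closure is proper (which follows from the dimension bound you established); you might also note that the dimension bookkeeping is cleanest after complexifying, since over $\R$ one appeals either to semialgebraic dimension theory or to the fact that the real bad locus sits inside the real points of the complex one.
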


\begin{proof}
We use the criterion of Lemma~\ref{lem.complement}. Let $u \in \R^n$ be nonzero. We need to show that $\mathcal{F}_u$ spans $\R^n$. Let $t$ be the size of the support of $u$. If $t=n$, then $\{e_1, \ldots, e_n\} \subset \mathcal{F}_u$, so it spans. Assume $1 \leq t \leq n-1$. Without loss of generality, assume $\supp(u) = [t]$.

Claim 1: At most $t-1$ vectors in $\mathcal{F}$ whose support intersects $[t]$ are orthogonal to $u$.
Suppose otherwise. Assume there exist $t$ distinct vectors $W=\{w_1, \ldots, w_t\} \subset \mathcal{F}$ such that $\supp(w_i) \cap [t] \neq \emptyset$ and $\langle w_i, u \rangle = 0$.
First, we observe that $W$ must consist entirely of block vectors $\{v_k\}$. If $w_i = e_j$, the condition $\supp(w_i) \cap [t] \neq \emptyset$ implies $j \in [t]$. But since $[t]$ is the support of $u$, we have $\langle w_i, u \rangle = u_j \neq 0$. This contradicts $\langle w_i, u \rangle = 0$.
Thus, $W \subset \{v_k\}_{k=1}^b$. Let $w_i'$ be the projection of $w_i$ onto $\mathbb{R}^{[t]}$, and $u'$ be the projection of $u$. Then $w_i' \neq 0$, $\langle w_i', u' \rangle = 0$, and $u'$ has full support in $\mathbb{R}^{[t]}$.
Let $S_i = \mathrm{Supp}(w_i')$, and since $S_i \subseteq [t]$, it holds that
$$\Bigl|\bigcup_{i=1}^t S_i\Bigr|\le t.$$
Since the vectors $\{v_k\}$ are chosen generically within their respective block supports, and thus also the $w_i$'s in $W$, the projections $\{w_i' \}$ are generic within the supports $\{S_i\}$.
To arrive at a contradiction, we will need the following lemma.
\begin{lemma}
\label{good-lemma}
Let $S_1,\ldots,S_t\subseteq [t]:=\{1,\ldots,n\}$ be nonempty sets such that
\[
\Bigl|\bigcup_{i=1}^t S_i\Bigr|\le t.
\]
For each $i$, let
\[
V_i:=\{v\in\mathbb{R}^t:\operatorname{Supp}(v)\subseteq S_i\}\cong \mathbb{R}^{|S_i|},
\qquad
\mathbb{A}:=\prod_{i=1}^t V_i.
\]
Then there is a nonempty Zariski open set $U\subset\mathbb{A}$ such that for every $(v_1,\ldots,v_t)\in U$, the subspace $\langle v_1,\ldots,v_t\rangle\subset\mathbb{R}^t$ contains at least one standard basis vector.
\end{lemma}
By Lemma~\ref{good-lemma}, the span of the $t$ vectors $w_i'$ contains at least one standard basis vector $e_j$ with $j \in [t]$. Since $\langle w_i', u' \rangle = 0$ for all $i$, it follows that $\langle e_j, u' \rangle = 0$. This contradicts the fact that $u'$ has full support. Thus, Claim 1 holds.

\textbf{Claim 2:} $\mathcal{F}_u$ spans $\R^n$.

We aim to identify $n$ vectors $w_1, \ldots , w_n \in \mathcal{F}_u$ such that $i \in \supp(w_i)$ for $i \in [n]$. Such a set of generic vectors forms a basis.

For $i=1, \ldots, t$, we set $w_i = e_i$, as $e_i \in \mathcal{F}_u$.

We need to find $w_i$ for $i=t+1, \ldots, n$ from the block vectors $\{v_j\}$. Let $\mathcal{B}_{int} = N([t])$ be the set of blocks intersecting the support of $u$. Let $\mathcal{B}' \subset \mathcal{B}_{int}$ be the subset of blocks whose corresponding vectors $v_j$ are in $\mathcal{F}_u$. By Claim 1, $|\mathcal{B}_{int} \setminus \mathcal{B}'| \leq t-1$.

We construct a matching between the remaining coordinates $\{t+1, \ldots, n\}$ and the blocks in $\mathcal{B}' $ by verifying that Hall's condition holds. Let $S \subseteq \{t+1, \ldots, n\}$. We must show $|N(S) \cap \mathcal{B}'| \geq |S|$.

By  \eqref{eq:1} with $T=[t]$ we have
\[
|N(S) \cap N([t])| \geq |S| + t - 1.
\]
We relate this to $\mathcal{B}'$:
\begin{align*}
|N(S) \cap \mathcal{B}'| &= |N(S) \cap N([t])| - |N(S) \cap (N([t]) \setminus \mathcal{B}')| \\
&\geq |N(S) \cap N([t])| - |N([t]) \setminus \mathcal{B}'| \\
&\geq (|S| + t - 1) - (t-1) = |S|.
\end{align*}
Thus, there exists a matching that provides the required vectors $w_{t+1}, \ldots, w_n$. Therefore, $\mathcal{F}_u$ spans $\R^n$.
\end{proof}

To complete the proof of Theorem~\ref{thm.blockdesign}, we need to show that an $(n,w,w-1)$-designs satisfy the combinatorial condition \eqref{eq:1} and prove Lemma \ref{good-lemma}. We begin with the proof of the lemma.

\begin{proof}[Proof of Lemma \ref{good-lemma}]
Let $U(T):=\bigcup_{i\in T}S_i$ for $T\subseteq [t]$. Since $\bigl|\bigcup_{i=1}^t S_i\bigr|\le t$, the family
\[
\mathcal{F}:=\{\, T\subseteq [t] : |U(T)|\le |T|, T\neq  \emptyset\,\}
\]
is nonempty. Let $T\in\mathcal{F}$ that is \emph{minimal by inclusion}. By minimality,
\begin{equation}\label{eq:strict}
|U(R)|>|R|\quad\text{for every nonempty proper }R \subsetneq T,
\end{equation}
since otherwise $R$ would violate minimality. In particular, if $|R|=|T|-1$, then $|U(R)|\ge |T|$. As $U(R)\subseteq U(T)$, we obtain $|U(T)|\ge |T|$. Together with $|U(T)|\le |T|$ (because $T\in\mathcal{F}$), this forces
\begin{equation}\label{eq:equality}
|U(T)|=|T|.
\end{equation}

Consider the bipartite graph with left vertices $T$, right vertices $J:=U(T)$, and edges $(i,j)$ whenever $j\in S_i$. By \eqref{eq:strict} and \eqref{eq:equality}, we have
\[
|U(R)|\ge |R|\quad\text{for all }R\subseteq T,
\]
(with strict inequality for proper $R\subsetneq T$ and equality when $R=T$). Hence \emph{Hall's condition} holds for the family $(S_i)_{i\in T}$ and, because $|J|=|T|$, there exists a bijection (perfect matching)
\[
\sigma:T\to J\quad\text{with}\quad \sigma(i)\in S_i\ \text{for all }i\in T.
\]

Now restrict coordinates to $J$ and form the $|J|\times|T|$ matrix $M$ whose $i$-th column (for $i \in T$) is $v_i|_J$. The determinant $\det(M)$ is a polynomial in the coordinate entries of $(v_i)_{i \in T}$. To show this polynomial is not identically zero, we demonstrate a specific specialization where it is nonzero. Using the matching, choose the specialization
$$v_i:=e_{\sigma(i)}\in V_i\quad (i\in T).$$

Then $M$ is the permutation matrix $(\delta_{j,\sigma(i)})_{j\in J,\ i\in T}$, so $\det(M)=\pm1\neq 0$. Since the polynomial $\det(M)$ is not identically zero, its nonvanishing locus
$$ U_T:=\{(v_i)_{i\in T}\in \textstyle\prod_{i\in T} V_i:\ \det(M)\neq 0\}
$$
is a nonempty Zariski open set on which the vectors $(v_i)_{i\in T}$ are linearly independent and thus $\langle v_i:\ i\in T\rangle=\mathbb{R}^J$, in particular containing each $e_j$ for $j\in J$.
Finally, set
\[
U:=U_T\times \prod_{i\notin T} V_i\subset\mathbb{A}.
\]
For any $(v_1,\ldots,v_t)\in U$, the span $\langle v_1,\ldots,v_t\rangle$ contains $\langle v_i:\ i\in T\rangle=\mathbb{R}^J$, and therefore contains at least one standard basis vector $e_j$ (indeed, all with $j\in J\subseteq \bigcup_{i=1}^t S_i$).
\end{proof}

Lastly, we show that $(n,w,w-1)$-designs satisfy the combinatorial condition \eqref{eq:1}.
\begin{proposition} \label{prop.combinatorial_condition}
Let $G$ be the bipartite incidence graph of an $(n,w,w-1)$-design. Then, for any two non-empty, disjoint point sets $S, T\subseteq[n]$,
\[
|N(S)\cap N(T)|\;\ge\;|S|+|T|-1 .
\]
\end{proposition}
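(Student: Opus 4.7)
My approach is a concise double-counting argument. Let $s=|S|$, $t=|T|$, and set $X=N(S)\cap N(T)$. For each block $B\in X$, write $a_B=|B\cap S|$ and $b_B=|B\cap T|$; since $S,T$ are disjoint and $|B|=w$, the counts satisfy $1\leq a_B\leq s$, $1\leq b_B\leq t$, and $a_B+b_B\leq w$.

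First I count triples $(s',t',B)$ with $s'\in S$, $t'\in T$, and $\{s',t'\}\subseteq B$. Summed over blocks (and noting $a_B b_B=0$ for $B\notin X$) this equals $\sum_{B\in X}a_B b_B$; summed over cross-pairs it equals $st(w-1)$ by the $2$-design property, since $S\cap T=\emptyset$ and each of the $st$ cross-pairs lies in exactly $w-1$ blocks. Hence
\[
\sum_{B\in X} a_B b_B \;=\; st(w-1).
\]

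The main step is a pointwise inequality: for any integers $(a,b)$ with $1\leq a\leq s$, $1\leq b\leq t$, and $a+b\leq w$, we have $ab(s+t-1)\leq st(w-1)$. The key identity, verified by direct polynomial expansion, is
\begin{align*}
st(w-1)-ab(s+t-1) &= (s-a)\,b\,(w-1-a) + (s-a)(w-1)(t-b) \\
&\quad + (t-b)\,a\,(w-1-b) + ab(w-a-b),
\end{align*}
and each of the four summands on the right is non-negative under our constraints: $(s-a)$ and $(t-b)$ are non-negative by the range bounds; $w-1-a$ and $w-1-b$ are non-negative because $a+b\leq w$ together with $a,b\geq 1$ force $a,b\leq w-1$; and $w-a-b\geq 0$ is immediate.

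Summing the pointwise inequality over $B\in X$ yields $(s+t-1)\sum_{B\in X} a_B b_B \leq st(w-1)\,|X|$; substituting the first identity gives $(s+t-1)\,st(w-1)\leq st(w-1)\,|X|$, and dividing by the positive quantity $st(w-1)$ concludes $|X|\geq s+t-1$. The main obstacle is discovering the non-negative decomposition above; once it is in hand the rest is mechanical, and notably the argument is uniform, requiring no case analysis on the sizes of $s,t$ relative to $w$.
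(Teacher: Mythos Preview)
Your proof is correct. The identity
\[
st(w-1)-ab(s+t-1) = (s-a)\,b\,(w-1-a) + (s-a)(w-1)(t-b) + (t-b)\,a\,(w-1-b) + ab(w-a-b)
\]
checks out by direct expansion, and each summand is indeed non-negative under the constraints $1\le a\le s$, $1\le b\le t$, $a+b\le w$ (noting $w\ge 2$ in this context, so that $st(w-1)>0$ and the final division is legitimate).

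Your approach is genuinely different from the paper's. Both begin with the same double-counting identity $\sum_{B\in X} a_B b_B = st(w-1)$, but from there the paper proceeds by a three-way case split on the relative sizes of $s+t$ and $w$ (and of $t$ and $w/2$), bounding the product $a_B b_B$ differently in each regime and then analyzing the resulting one-variable functions via monotonicity. You instead prove the single pointwise bound $ab(s+t-1)\le st(w-1)$ uniformly over the feasible region by exhibiting an explicit sum-of-non-negatives decomposition, and then sum. Your argument is shorter and case-free; the paper's case analysis, on the other hand, occasionally yields slightly sharper conclusions (for instance, in the case $t>w/2$ it actually gives $|N(S)\cap N(T)|\ge s+t$), though this extra strength is not exploited elsewhere.
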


\begin{proof}
Let $s=|S|$ and $t=|T|$. Assume without loss of generality $1\le t\le s$. We use a double counting argument based on pairs $(p,q)\in S\times T$. Every such pair is contained in exactly $w-1$ blocks. Therefore,
\begin{equation}\label{eq:pair-count}
(w-1)\,s\,t \;=\; \sum_{B\in N(S)\cap N(T)} |B\cap S|\cdot|B\cap T|.
\end{equation}
Since $S$ and $T$ are disjoint, $|B\cap S|+|B\cap T| \le |B| = w$. We analyze by cases.

\smallskip
\noindent\textbf{Case 1: $s+t<w$.}
Choose any $p\in S$ and $q\in T$. The pair $\{p,q\}$ lies in $w-1$ blocks. Thus,
$|N(S)\cap N(T)|\;\ge\;|N(p)\cap N(q)|=w-1\ge s+t-1$, since $s+t < w$.

\smallskip
\noindent\textbf{Case 2: $s+t\ge w$ and $t\le w/2$.}
In this case, the product $|B\cap S|\cdot|B\cap T|$ is at most $t(w-t)$.
Substituting this bound into~\eqref{eq:pair-count} gives,
\[
|N(S)\cap N(T)| \;\ge\; \frac{(w-1)\,s\,t}{t\,(w-t)} \;=\; \frac{s\,(w-1)}{w-t}.
\]
Define $g(t)=\frac{s(w-1)}{w-t}-t$. We analyze the behavior of $g(t)$ for $1 \le t \le w/2$.
Since $s+t \ge w$, we have $s \ge w-t$.
The derivative is $g'(t)=\frac{s(w-1)}{(w-t)^2}-1$.
\[
g'(t) \;\ge\;\frac{(w-t)(w-1)}{(w-t)^2}-1 \;=\;\frac{w-1}{w-t}-1.
\]
Since $t \ge 1$, $w-t \le w-1$, so $g'(t) \ge 0$.
Thus, $g(t)$ is increasing in $t$. Its minimum occurs at $t=1$.
$g(1) = \frac{s(w-1)}{w-1} - 1 = s-1$.
Therefore, $|N(S)\cap N(T)| \ge s+t-1$.

\smallskip

\noindent\textbf{Case 3: $s+t\ge w$ and $t>w/2$.}
The product $|B\cap S|\cdot|B\cap T|$ is maximized when the sizes are as close as possible, i.e., $\lceil w/2\rceil$ and $\lfloor w/2\rfloor$. 
Write $C:=\lceil w/2\rceil\lfloor w/2\rfloor$. Then, 
\eqref{eq:pair-count} yields
\[
|N(S)\cap N(T)| \;\ge\; \frac{(w-1)st}{C}.
\]
Set $h(s,t):=\frac{(w-1)st}{C}-(s+t)$. On the domain $s\ge t\ge
\lfloor w/2\rfloor+1$ we have
\[
\frac{\partial h}{\partial s}=\frac{w-1}{C}\,t-1\ge 0,\qquad
\frac{\partial h}{\partial t}=\frac{w-1}{C}\,s-1\ge 0.
\]
 Thus
$h$ is increasing in each variable, and its minimum on the domain
occurs at $s=t=\lfloor w/2\rfloor+1$. A direct computation gives
\[
h\bigl(\lfloor w/2\rfloor+1,\ \lfloor w/2\rfloor+1\bigr)=
\begin{cases}
1-\dfrac{1}{k^2} & \text{if } w=2k,\\[6pt]
0 & \text{if } w=2k+1.
\end{cases}
\]
Hence $h(s,t)\ge 0$, i.e. $|N(S)\cap N(T)|\ge s+t\ge s+t-1$, as needed.
\qedhere

\end{proof}

\subsection{Structurally Explicit Constructions and Explicit Examples}
Theorem~\ref{thm.blockdesign} provides a general framework for constructing vital rank-one POVMs. Whenever the underlying $(n,w,w-1)$-design is explicitly defined (as in Example \ref{ex:finite-fields}, below), the resulting frame construction can be considered \emph{structurally explicit}—that is, the supports of the measurement vectors $\{v_i\}$ are explicitly known. Theorem~\ref{thm.blockdesign} guarantees that a generic choice of coefficients within these supports yields a vital frame (provided $w\ge 2$).We now highlight three canonical families of designs that exist for all $n$ and illustrate the spectrum of trade-offs between size and sparsity discussed in Remark \ref{remark:sparsity}. In some of these cases, we can go beyond structural explicitness and provide \emph{fully explicit} constructions, meaning concrete, closed-form choices of the coefficients, rather than relying only on the genericity argument.

\begin{enumerate}\item \textbf{Minimal Size ($w=n$):} The design consists of $b=n-1$ blocks, where each $B_i = [n]$. The associated frames have size $2n-1$. The frame is of the form $\mathcal{F} = \{e_1, \ldots , e_n, v_1, \ldots , v_{n-1}\}$, where the $v_i$ have full support. This recovers the known result that frames of size $2n-1$ in $\R^n$ that permit phase retrieval are vital. \emph{Fully Explicit Construction:} For a frame of size $2n-1$, the Complement Property is equivalent to the frame being \emph{full spark} (i.e., every subset of $n$ vectors is linearly independent). This occurs if and only if the $n \times (n-1)$ matrix $V$ formed by the columns $\{v_j\}$ is \emph{super-regular} (i.e., every square submatrix is invertible). We can explicitly construct such a matrix $V$ using a Cauchy matrix structure, to obtain an explicit  
%Define the entries of $V$ as:$$V_{ij} = \frac{1}{i - (n+j)}, \quad i=1,\ldots,n, \quad j=1,\ldots,n-1.$$Since the parameters $i$ and $n+j$ are distinct, $V$ is a Cauchy matrix and thus super-regular. 
 frame $\mathcal{F}$ that is  full spark and vital.\item \textbf{Size $2n$ ($w=n-1$):} The design consists of $b=n$ blocks (assuming $n\ge 3$), where the blocks are $B_i = [n] \setminus \{i\}$. The associated frames have size $2n$. The frame is of the form $\mathcal{F} = \{e_1, \ldots, e_n, v_1, \ldots , v_n\}$, where $v_i$ is supported on $[n] \setminus \{i\}$. By Theorem~\ref{thm.blockdesign}, a generic choice of coefficients within these supports yields a vital frame. Constructing a fully explicit example for this case appears non-trivial. For instance, the natural choice $v_i = \sum_{j\neq i} e_j$ does not generally satisfy the Complement Property for $n\ge 4$. Indeed, for $n=4$, consider $u=e_1-e_2$, and the set $\mathcal{F}_u$ does not span $\R^4$.\item \textbf{Maximal Size ($w=2$):} The design consists of $b=\binom{n}{2}$ blocks, where the blocks are all pairs $\{i,j\}$. The associated frames have size $\binom{n+1}{2}$. The frame has the form $\mathcal{F} = \{e_1, \ldots , e_n\} \cup \{v_{ij}\}_{1 \leq i < j \leq n}$, where $v_{ij}$ is supported on $\{i,j\}$. This construction is equivalent (in the sense of Section \ref{sec:equiv-vital-measurements}, below) to the explicit maximal vital frame constructed in Proposition~\ref{prop.real_sharp}  which has the simplest coefficients: $v_{ij} = e_i + e_j, \text{for } 1 \leq i < j \leq n.$
 \end{enumerate}
We conclude this section with an example of a vital frame that is not covered by the constructions above, highlighting the flexibility of our framework. Unlike the previous constructions, it is currently unknown how to make these examples explicit.
\begin{example}
\label{ex:finite-fields}
We construct a vital frame in $\R^7$ of size $7 + \frac{7\cdot 6}{3} = 21$ using a $(7,3,2)$ design derived from the affine group $\mathrm{AGL}(1,7)$.

The construction of the design and its proof that it is indeed a design are given next. 
\begin{proposition}\label{prop.732}
Let $V=\mathbb{F}_7$ and $G=\mathrm{AGL}(1,7)=\{x\mapsto ax+b:\ a\in\mathbb{F}_7^\times,\ b\in\mathbb{F}_7\}$. Let $B_0=\{0,1,3\}\subset V$. The orbit of $B_0$ under the action of $G$, $\mathcal{B}=\{g(B_0): g\in G\}$, forms a $(7,3,2)$ block design.
\end{proposition}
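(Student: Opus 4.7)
The plan is to verify the three axioms of an $(n,k,\lambda)$-design with $(n,k,\lambda)=(7,3,2)$ by combining the orbit-stabilizer theorem with the sharp $2$-transitivity of $G=\mathrm{AGL}(1,7)$ on $V=\mathbb{F}_7$. Recall $|G|=6\cdot 7=42$; sharp $2$-transitivity is the statement that for any ordered pair $(x,y)\in V^2$ with $x\neq y$ there is a unique $g\in G$ with $g(0)=x$ and $g(1)=y$ (explicitly, $a=y-x$, $b=x$), which follows at once by comparing $|G|=42$ with $|\{(x,y)\in V^2:x\neq y\}|=42$. The block-size axiom is then immediate: each $g\in G$ is a bijection of $V$, so $|g(B_0)|=|B_0|=3$ for every block.

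To count blocks I would determine the setwise stabilizer $H:=\mathrm{Stab}_G(B_0)$. The condition $g(x)=ax+b\in H$ reads $\{b,a+b,3a+b\}=\{0,1,3\}$, which forces $b\in\{0,1,3\}$ and reduces the problem to three small linear cases over $\mathbb{F}_7$; the solutions are exactly $(a,b)\in\{(1,0),(2,1),(4,3)\}$, which act on $B_0$ as the powers of the $3$-cycle $(0\ 1\ 3)$. Hence $|H|=3$, and by orbit-stabilizer $|\mathcal{B}|=|G|/|H|=14$, matching the expected count $b=n(n-1)/w=14$.

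For the pair-count $\lambda$, I would fix distinct $x,y\in V$ and count the set $\Omega:=\{g\in G:\{x,y\}\subseteq g(B_0)\}$ in two ways. On one hand, $g(B_0)=g'(B_0)$ iff $g'\in gH$, so $\Omega$ is a union of left cosets of $H$, one per distinct block containing $\{x,y\}$; thus $|\Omega|=|H|\cdot\lambda=3\lambda$. On the other hand, $g\in\Omega$ iff $g^{-1}(x)$ and $g^{-1}(y)$ are distinct elements of $B_0$, and for each of the $3\cdot 2=6$ ordered pairs $(u,v)$ of distinct elements of $B_0$ sharp $2$-transitivity produces a unique $g\in G$ with $g(u)=x$ and $g(v)=y$; these six elements are pairwise distinct and exhaust $\Omega$, so $|\Omega|=6$. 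Equating gives $3\lambda=6$, i.e.\ $\lambda=2$.

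The only real obstacle is the stabilizer computation, and even that reduces to three short systems of equations over $\mathbb{F}_7$ indexed by $b\in B_0$; the rest is a clean application of orbit-stabilizer together with sharp $2$-transitivity.
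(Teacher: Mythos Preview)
Your proof is correct and follows essentially the same route as the paper: compute the setwise stabilizer of $B_0$ (order $3$), apply orbit--stabilizer to get $|\mathcal{B}|=14$, and then use $2$-transitivity of $\mathrm{AGL}(1,7)$ to determine $\lambda$. The only cosmetic difference is in the last step: the paper first argues that $2$-transitivity forces $\lambda$ to be constant and then reads off $\lambda=2$ from the identity $b\binom{k}{2}=\lambda\binom{n}{2}$, whereas you compute $\lambda$ directly for each pair via the double count $|\Omega|=|H|\cdot\lambda=6$; both are standard and equally short.
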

\begin{proof}
We determine the stabilizer of $B_0$, $G_{B_0}$. The affine map $f(x)=2x+1\pmod 7$ satisfies $f(0)=1, f(1)=3, f(3)=0$. Thus $f$ cyclically permutes $B_0$, so $|\langle f\rangle|=3$ and $\langle f\rangle \subseteq G_{B_0}$. It can be verified that $|G_{B_0}|=3$.

The size of the group is $|G|=42$. By the orbit-stabilizer theorem, the number of blocks is $|\mathcal{B}|=42/3=14$. Each block has a size of $3$.

Since the action of $\mathrm{AGL}(1,7)$ is 2-transitive on $\mathbb{F}_7$, the action on pairs of points is transitive. Therefore, every pair is contained in the same number of blocks, $\lambda$. We calculate $\lambda$ by the standard identity $b\binom{k}{2} = \lambda \binom{n}{2}$:
\[
14 \binom{3}{2} = \lambda \binom{7}{2} \implies 42 = 21\lambda \implies \lambda=2.
\]
\end{proof}
\end{example}

\section{Equivalence of vital rank-one POVMs}
\label{sec:equiv-vital-measurements}
We now consider the classification of vital rank-one POVMs. The property of being vital depends on the geometric configuration of the measurement vectors. We formalize the notion of equivalence between measurements.

If $\mathcal{F} = \{v_1, \ldots , v_m\}$ is a frame, applying an invertible linear transformation $A \in \GL_n(\K)$ to all vectors yields a new frame $A\mathcal{F} = \{Av_1, \ldots, Av_m\}$. This transformation preserves the property of phase retrieval (i.e., PSI-Completeness) and vitality. Furthermore, the ordering of the vectors does not affect these properties. 

\begin{definition}
Two $m$-element frames $\mathcal{F}_1, \mathcal{F}_2$ in $\K^n$ are \emph{equivalent} if there exists an element $A \in \GL_n(\K)$ and a permutation $\sigma \in S_m$ such that $A\mathcal{F}_1 = \sigma \mathcal{F}_2$ as sets of vectors.
\end{definition}

\begin{remark}[Physical Equivalence in QST]
In the context of Quantum State Tomography, equivalence is often defined in terms of unitary transformations $U(n)$ (or orthogonal transformations $O(n)$ in the real case), as these preserve the underlying physics and the normalization of the states. We adopt the broader $\GL_n$ equivalence here, which is standard in frame theory, to classify the fundamental geometric arrangement of the measurement subspaces, independent of normalization or the specific choice of basis defining the inner product.
\end{remark}

The following proposition lets us assume that any frame has a standard form.

\begin{proposition} \label{prop.equiv}
Every $m$-element frame $\mathcal{F}$ in $\K^n$ is equivalent to a frame of the form $$\{e_1, \ldots , e_n, v_1, \ldots, v_{m-n}\},$$ where $\{e_1, \ldots , e_n\}$ is the standard basis.
\end{proposition}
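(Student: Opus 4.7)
The plan is to exploit the fact that a frame, by definition, spans $\K^n$, so we can extract a basis from it and then use a linear change of coordinates to send that basis to the standard basis.

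First, since $\mathcal{F} = \{u_1, \ldots, u_m\}$ spans $\K^n$, I would select $n$ linearly independent vectors from $\mathcal{F}$; call them $w_1, \ldots, w_n$, and let $\sigma \in S_m$ be a permutation that moves these to the first $n$ positions of the frame. Let $w_{n+1}, \ldots, w_m$ denote the remaining $m-n$ vectors of $\mathcal{F}$ (in any order). Then define $A \in \GL_n(\K)$ to be the inverse of the matrix $[w_1 \mid w_2 \mid \cdots \mid w_n]$; since $\{w_1, \ldots, w_n\}$ is a basis, this matrix is invertible, so $A$ is well-defined and $A w_i = e_i$ for $i = 1, \ldots, n$.

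Applying $A$ to the permuted frame yields
\[
A\sigma\mathcal{F} = \{e_1, \ldots, e_n, A w_{n+1}, \ldots, A w_m\},
\]
and setting $v_j := A w_{n+j}$ for $j = 1, \ldots, m-n$ gives the desired standard form. By definition of equivalence, $\mathcal{F}$ is equivalent to this frame.

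There is essentially no obstacle here: the whole argument rests on the elementary fact that any spanning set in $\K^n$ contains a basis, plus the universal property that $\GL_n(\K)$ acts transitively on ordered bases. The role of the permutation in the definition of equivalence is precisely what allows the basis extracted from $\mathcal{F}$ to be relabeled as the first $n$ vectors.
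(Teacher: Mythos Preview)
Your proof is correct and essentially identical to the paper's: extract a basis from the spanning set, permute it to the front, and apply the change-of-basis matrix sending it to the standard basis. The paper's version is just slightly terser.
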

\begin{proof}
Let $\mathcal{F} = \{w_1, \ldots , w_m\}$. Since $\mathcal{F}$ is a frame, it spans $\K^n$ and must contain a basis. After reordering, we may assume $\{w_1, \ldots , w_n\}$ is a basis. Let $A \in \GL_n(\K)$ be the transformation defined by $A w_i = e_i$ for $i =1, \ldots, n$. Applying $A$ to $\mathcal{F}$ gives the desired form.
\end{proof}

\begin{remark}
The set of equivalence classes of $ m$-element frames in $\K^n$ is isomorphic to the quotient of the Grassmannian $Gr(n,m)$ by the natural action of the symmetric group on $\K^m$.
\end{remark}

We now consider the question of whether every vital frame is equivalent to one coming from an $(n,w,w-1)$
block design for some $w | n(n-1)$. 
Obviously,
a vital frame of size not equal to $n + \frac{n(n-1)}{w}$ for some $w | n(n-1)$ cannot come from a block design. However, it is currently unknown whether such frames exist.

When $w= n$ then every generic frame is vital and they are all equivalent to one coming from an $(n,n,n-1)$ block design.
 Next, we show that for every $n$, there are families of vital frames of size $2n$ that are not equivalent to those from $(n,n-1,n-2)$ block designs.

\subsection{Non-equivalence for size \texorpdfstring{$2n$}{2n}}
To distinguish different families of frames, we analyze the combinatorial structure of their linear dependencies. Since equivalence under the action of $\GL_n$ preserves linear dependence, the structure of maximal non-spanning subsets, including their sizes and intersection patterns, is invariant under $\GL_n$ equivalence.

We will need the following standard result, whose proof is omitted.

\begin{lemma} \label{lem.generic_transversal}
Let $\Sigma_1, \ldots, \Sigma_n \subseteq [n]$. A set of vectors $\{w_i\}_{i=1}^n\subset \R^n$, chosen generically such that $\supp(w_i) \subseteq \Sigma_i$, spans $\R^n$ if and only if the $\ Sigma_i$'s have a perfect matching, i.e., there exists a permutation $\sigma$ such that $\sigma(i) \in \Sigma_i$ for all $i$.
\end{lemma}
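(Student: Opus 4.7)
The plan is to reduce the claim to a direct computation with the Leibniz formula for the determinant, treating the nonzero entries of the relevant matrix as algebraically independent indeterminates.

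First I would form the $n\times n$ matrix $M$ whose $i$-th column is the vector $w_i$. By the hypothesis $\supp(w_i)\subseteq\Sigma_i$, the entry $M_{ji}$ is forced to be zero unless $j\in\Sigma_i$; the remaining entries are independent coordinates on the parameter space $\prod_{i=1}^n \mathbb{R}^{\Sigma_i}$. The vectors $w_1,\dots,w_n$ span $\mathbb{R}^n$ if and only if $\det M\neq 0$, so I want to determine when the polynomial $\det M$, viewed as a polynomial in those free entries, is identically zero.

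Expanding via Leibniz gives
\[
\det M \;=\; \sum_{\sigma\in S_n}\operatorname{sgn}(\sigma)\,\prod_{i=1}^n M_{\sigma(i),\,i}.
\]
A term indexed by $\sigma$ is not identically zero precisely when $\sigma(i)\in\Sigma_i$ for every $i$, i.e.\ exactly when $\sigma$ is a perfect matching (system of distinct representatives) for the family $(\Sigma_i)$. Since each nonzero entry $M_{ji}$ is its own independent indeterminate, different surviving permutations contribute genuinely distinct monomials that cannot cancel.

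From this, both directions of the equivalence fall out immediately. If no perfect matching exists, every term in the Leibniz expansion vanishes, so $\det M\equiv 0$, and $\{w_i\}$ fails to span $\mathbb{R}^n$ for every choice of coefficients. Conversely, if a perfect matching $\sigma$ does exist, the specialization $w_i=e_{\sigma(i)}$ makes $M$ a signed permutation matrix with $\det M=\pm 1$, showing $\det M$ is a nonzero polynomial; its nonvanishing locus is then a nonempty Zariski open (hence dense) subset of the parameter space $\prod_i \mathbb{R}^{\Sigma_i}$, which is exactly the meaning of ``generically spans''. The only point requiring care is the independence-of-entries observation that prevents accidental cancellation of surviving Leibniz terms, but this is immediate once one writes each $w_i$ in coordinates on its own support.
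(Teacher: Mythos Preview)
Your proof is correct and is the standard argument for this folklore result. The paper itself omits the proof entirely, explicitly calling the lemma ``a standard result, whose proof is omitted''; your Leibniz-expansion argument with the specialization $w_i=e_{\sigma(i)}$ is exactly the expected justification, and in fact mirrors the same specialization technique the paper uses in the proof of Lemma~\ref{good-lemma}.
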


We now characterize the non-spanning subsets of size $n$ for the block design construction corresponding to $w=n-1$.

\begin{lemma} \label{lem.nonspanning}
Let $\mathcal{F}_{B}$ be a generic frame constructed from an $(n,n-1,n-2)$ block design, i.e., $\mathcal{F}_{B} = \{e_1, \ldots, e_n, v_1, \ldots , v_n\}$, where $v_i$ is generically supported on $B_i=[n] \setminus \{i\}$.
Then any non-spanning subset $S \subset \mathcal{F}_{B}$ of size $n$ must have the form
$S_i = \{e_1, \ldots , \widehat{e_i}, \ldots, e_n, v_i \}$ for some $i \in [n]$.
\end{lemma}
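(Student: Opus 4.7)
The plan is to combine Lemma~\ref{lem.generic_transversal} with Hall's marriage theorem to translate the spanning question into a combinatorial support check. By the lemma, a generic size-$n$ subset $S$ of $\mathcal{F}_{B}$ spans $\R^n$ iff the supports of its vectors admit a perfect matching; by Hall's theorem this is in turn equivalent to $|\bigcup_{s \in T} \supp(s)| \geq |T|$ for every $T \subseteq S$. I would then parametrize $S = \{e_j : j \in J\} \cup \{v_i : i \in I\}$ with $|J| + |I| = n$, so that the support family consists of singletons $\{j\}$ for $j \in J$ together with cosingletons $[n] \setminus \{i\}$ for $i \in I$, and determine exactly when Hall's condition can fail.

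For any $T \subseteq S$ with basis-part $J' \subseteq J$ and block-part $I' \subseteq I$, the support union equals $J' \cup \bigcup_{i \in I'}([n] \setminus \{i\}) = J' \cup ([n] \setminus \bigcap_{i \in I'}\{i\})$. The decisive observation is that once $|I'| \geq 2$, the intersection is empty and the union is all of $[n]$, so the Hall inequality $n \geq |T|$ is automatic. This collapses the analysis to $|I'| \leq 1$: when $|I'| = 0$ the union is $J'$, of size exactly $|T|$; when $|I'| = 1$ with $I' = \{i\}$ the union has size $n$ if $i \in J'$ and $n-1$ otherwise. In this last subcase the Hall inequality can fail only if $|T| \geq n$, forcing $T = S$, $|I| = 1$, and $J = [n] \setminus \{i\}$, i.e., $S = S_i$.

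For completeness I would note that $S_i$ is genuinely non-spanning: every vector in $S_i$ has vanishing $i$-th coordinate (the $e_j$ with $j \neq i$ trivially, and $v_i$ because $\supp(v_i) \subseteq [n] \setminus \{i\}$), so $S_i \subset e_i^\perp$ contains $n$ vectors in an $(n-1)$-dimensional subspace. I do not anticipate a serious obstacle; the only delicate point is to run the case split on $(|J'|, |I'|)$ without overlooking small $T$. The simplification that makes the argument short is recognizing at the outset that the union of any two cosingleton supports is already $[n]$, so only subsets $T$ containing at most one block vector ever require a nontrivial check.
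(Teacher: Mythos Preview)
Your proposal is correct and follows essentially the same approach as the paper: both invoke Lemma~\ref{lem.generic_transversal} to reduce spanning to a Hall-condition check, and both hinge on the observation that any two cosingleton supports $[n]\setminus\{a\}$ and $[n]\setminus\{b\}$ already union to $[n]$. The only cosmetic difference is that the paper first matches the $e_j$'s to their own indices and then applies Hall to the residual problem of matching the block vectors to the remaining coordinates, whereas you apply Hall directly to all of $S$; the case analyses are otherwise parallel.
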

Note that $\widehat{e_i}$ means that $e_i$ is omitted from the set $S_i.$
\begin{proof}
Let $S \subset \mathcal{F}_B$ have size $n$. By Lemma~\ref{lem.generic_transversal}, $S$ spans $\R^n$ if and only if the family of supports $(\Sigma_w)_{w\in S}$ have a perfect matching.

Let $S_E = S \cap \{e_i\}$ and $S_V = S \cap \{v_i\}$. Let $k=|S_V|$, so $|S_E|=n-k$.
Let $I_E$ be the set of indices of $S_E$. A perfect matching exists if and only if we can simultaneously match $S_E$ to $I_E$ and $S_V$ to the remaining coordinates $C = [n] \setminus I_E$. Note $|C|=k$. The matching of $S_E$ to $I_E$ is trivial. Simply, match $e_j$ to $j$. Next, we need to verify if $S_V$ can be matched to $C$.

\textbf{Case 1: $k=0$.} $S_V=\emptyset, C=\emptyset$. A matching exists. $S$ spans.

\textbf{Case 2: $k=1$.} $S_V=\{v_i\}$. $C=\{j\}$. A matching exists if and only if $j \in \supp(v_i) = B_i$, which means $j \neq i$.
If $j=i$, then $I_E = [n]\setminus\{i\}$, so $S=\{v_i\} \cup \{e_l\}_{l\neq i} = S_i$. No matching exists, and $S_i$ does not span.
If $j\neq i$, a matching exists. $S$ spans.

\textbf{Case 3: $k\geq 2$.} We verify Hall's condition for matching $S_V$ to $C$. Let $J \subseteq S_V$. We must show that the set of available coordinates $N(J) = (\bigcup_{v_j \in J} \supp(v_j)) \cap C$ satisfies $|N(J)| \geq |J|$.

If $|J| \geq 2$. Let $v_a, v_b \in J$ be distinct vectors. Then, the union of their supports is $B_a \cup B_b = [n]$.
Thus, $N(J) = [n] \cap C = C$.
Since $|J| \leq |S_V| = |C|$, we have $|N(J)| = |C| \geq |J|$.

If $|J|=1$. $J=\{v_i\}$. We need $|N(J)| \geq 1$.
$N(J) = B_i \cap C$, which implies that $C\backslash \{i\}\subseteq N(J)$ and therefore, $1\leq |C|-1\leq |N(J)|$ since $|C|\geq 2.$
Therefore, if $k\geq 2$, a matching always exists, and thus $S$ spans.

We conclude that the only non-spanning subsets of size $n$ are the sets $S_i$.
\end{proof}

We now prove the main result of this section.

\begin{proposition} \label{prop.non_equivalence}
If $n > 3$, there exist families of vital frames of size $2n$ in $\R^n$ which are not equivalent to frames coming from an $(n,n-1,n-2)$ block design.
\end{proposition}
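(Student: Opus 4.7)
The plan is to invoke the matroid of the frame — in particular the collection of linearly dependent $n$-element subsets — as a $\GL_n(\R)$-invariant that distinguishes equivalence classes. By Lemma~\ref{lem.nonspanning}, a generic frame produced by Construction~\ref{main-construction} from an $(n,n-1,n-2)$ design has exactly $n$ non-spanning $n$-subsets, one for each block. So it suffices to exhibit, for each $n>3$, a vital frame of size $2n$ with strictly more than $n$ non-spanning $n$-subsets.

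By Proposition~\ref{prop.equiv} I may normalize the candidate to $\mathcal{F}=\{e_1,\ldots,e_n,v_1,\ldots,v_n\}$ and only prescribe the supports $\Sigma_j=\supp(v_j)$. Take
\[
\Sigma_1=\{3,\ldots,n\},\qquad \Sigma_2=[n],\qquad \Sigma_j=[n]\setminus\{j\}\ \ (j\ge 3),
\]
with the coefficients of each $v_j$ chosen generically in the allowed coordinates. This pattern is not of the block-design shape $\Sigma_j=[n]\setminus\{\pi(j)\}$ for any permutation $\pi$ (coordinates $1$ and $2$ both vanish only in $\Sigma_1$, while $\Sigma_2$ is full), yet each $i\in[n]$ still lies in exactly $n-1$ of the $\Sigma_j$, so $|\mathcal{F}_{e_i}|=n$ for every $i$.

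For vitality I would apply Lemma~\ref{lemma-main}: its proof uses only the supports and condition~\eqref{eq:1}, not any block-design axiom. A brief case analysis of the neighborhoods shows that $N(S)=[n]$ whenever $|S|\ge 2$ and $S\cap\{3,\ldots,n\}\ne\emptyset$, $N(S)=[n]\setminus\{1\}$ when $\emptyset\ne S\subseteq\{1,2\}$, and $N(\{k\})=[n]\setminus\{k\}$ for $k\ge 3$. Combining these yields $|N(S)\cap N(T)|\ge n-2\ge|S|+|T|-1$ for every pair of disjoint nonempty $S,T\subseteq[n]$ provided $n\ge 4$. Hence $\mathcal{F}$ satisfies the Complement Property, and Lemma~\ref{prop1} then gives vitality.

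Finally, I count non-spanning $n$-subsets via Lemma~\ref{lem.generic_transversal}. The failures with $|I_V|=1$ correspond to pairs $(j,k)$ with $k\notin\Sigma_j$, totaling $n$ and matching the block-design count. The crucial \emph{extra} failures are those with $I_V=\{1,j\}$, $j\ne 1$, and $I_E=\{3,\ldots,n\}$, $C=\{1,2\}$: since $\Sigma_1\cap\{1,2\}=\emptyset$, Hall's condition fails for $v_1$ regardless of $v_j$, so each of the $n-1$ choices of $j$ produces a new non-spanning $n$-subset $\{e_3,\ldots,e_n,v_1,v_j\}$. Therefore $\mathcal{F}$ has at least $2n-1>n$ non-spanning $n$-subsets, so it cannot be equivalent to any block-design frame. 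The main obstacle is the third step: verifying~\eqref{eq:1} for these irregular supports, where the clean double-counting of Proposition~\ref{prop.combinatorial_condition} no longer applies and must be replaced by the case analysis above, exploiting the full-support block $\Sigma_2$.
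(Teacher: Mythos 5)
Your proposal is correct, but it takes a genuinely different route from the paper's proof. The paper does not construct a new frame: it takes the size-$2n$ vital frames of \cite{gonzalez2024thesis} (where $w_\ell$ is generic with support $[n]\setminus\{n-\ell+1\}$ for $\ell\le n-1$ and $w_n$ is a generic linear combination of $w_1,\dots,w_{n-1}$) and distinguishes them from design frames by the \emph{intersection pattern} of maximal non-spanning subsets: that frame has two maximal non-spanning $n$-subsets meeting in a single vector, namely $\{e_1,\dots,e_{n-1},w_1\}$ and $\{w_1,\dots,w_n\}$, whereas by Lemma~\ref{lem.nonspanning} any two maximal non-spanning subsets of a generic design frame meet in exactly $n-2>1$ vectors. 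You instead build a fresh support pattern ($\Sigma_1=\{3,\dots,n\}$, $\Sigma_2=[n]$, $\Sigma_j=[n]\setminus\{j\}$ for $j\ge 3$), prove vitality by rerunning the paper's own machinery --- which is legitimate: the proof of Lemma~\ref{lemma-main} uses only genericity and condition~\eqref{eq:1}, never the design axioms, and Lemma~\ref{prop1} then applies because $|\mathcal{F}_{e_i}|=n$ for all $i$ --- and you separate the two families by a coarser invariant, the \emph{number} of non-spanning $n$-subsets ($\ge 2n-1$ for yours versus exactly $n$ for generic design frames). Your route is more self-contained (no appeal to the vitality result of \cite{gonzalez2024thesis}) and shows that the design machinery of Section~\ref{sec.design} extends to non-design supports; the paper's route avoids re-verifying \eqref{eq:1} for an irregular incidence structure, at the price of importing an external construction.

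One step of yours needs repair, though the conclusion survives. The summary chain $|N(S)\cap N(T)|\ge n-2\ge |S|+|T|-1$ is false whenever $|S|+|T|\ge n$ (e.g.\ when $S\cup T=[n]$). The correct split: call a nonempty set \emph{deficient} if it is a singleton $\{k\}$ with $k\ge3$ or a subset of $\{1,2\}$; these are exactly the sets with $N\ne[n]$, and each still satisfies $|N|=n-1$. If both $S$ and $T$ are deficient, they are disjoint sets of total size at most $3$, so $|S|+|T|-1\le 2\le n-2\le |N(S)\cap N(T)|$ for $n\ge 4$; if at least one is not deficient, its neighborhood is all of $[n]$, hence $|N(S)\cap N(T)|\ge n-1\ge |S|+|T|-1$ since $S,T$ are disjoint. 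With that correction your verification of \eqref{eq:1} is complete, and the rest of the argument --- the Hall-type failure forced by $\Sigma_1\cap\{1,2\}=\emptyset$, which yields the $n-1$ extra dependent sets $\{e_3,\dots,e_n,v_1,v_j\}$, together with the $\GL_n$- and permutation-invariance of the count of dependent $n$-subsets --- is sound.
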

\begin{proof}
We show that that the block design frame $\mathcal{F}_B$ and the construction of vital frames of size $2n$ given in~\cite{gonzalez2024thesis}, which we denote by $\mathcal{F}$ are not equivalent.
First, we recall the construction of the frame.  

The frame $\mathcal{F}$ has the form $\{e_1, \ldots , e_n, w_1, \ldots , w_n\}$, where $w_\ell$  is a generic vector supported on $[n] \setminus \{n-\ell+1\}$ for $1 \leq \ell \leq n-1$, and $w_n = \sum_{i=1}^{n-1} \lambda_i w_i$ with the $\lambda_i$'s chosen generically.

We analyze the maximal non-spanning subsets of the two frames. Since both $\mathcal{F}_B$ and $\mathcal{F}$ are vital frames in $\R^n$, they satisfy the Complement Property. If a subset $S$ is non-spanning, its complement $S^c$ must span $\R^n$. Thus $|S^c| \geq n$. Since $|\mathcal{F}_B|=|\mathcal{F}|=2n$, this implies  that necessarily, $|S| \leq n$. Therefore, the maximal non-spanning subsets have size at most $n$, and any non-spanning subset of size $n$ is maximal.

\textbf{Analysis of $\mathcal{F}$:}
Consider the subset $S_1 = \{e_1, \ldots , e_{n-1}, w_1\}$. The support of $w_1$ (i.e., when $\ell=1$) is $[n] \setminus \{n\}$. Thus, $S_1$ is a maximal non-spanning subset.
Consider the subset $S_2 = \{w_1, \ldots , w_n\}$. By construction, $w_n$ is a linear combination of $w_1, \ldots, w_{n-1}$, so $S_2$ is also a non-spanning maximal subset, and  $|S_1 \cap S_2| = |\{w_1\}|=1$.

\textbf{Analysis of $\mathcal{F}_{B}$:}
By Lemma~\ref{lem.nonspanning}, the only maximal non-spanning subsets are the sets $S_i$.
For any two distinct such subsets, $S_i$ and $S_j$, it holds that 
$|S_i \cap S_j|=|\{e_k\}_{k\neq i,j}|=n-2$.

We conclude that  $\mathcal{F}$ has two maximal non-spanning subsets with intersection size $1$, while any two distinct maximal non-spanning subsets in $\mathcal{F}_{B}$ have intersection size $n-2$. Since the structure of maximal non-spanning subsets is invariant under $\GL_n$ equivalence and  $n>3$ (equivalently $n-2 > 1$),  the frames are not equivalent.
\end{proof}

\begin{remark}
One can verify that, for $n=3$, the frames constructed in~\cite{gonzalez2024thesis} are indeed equivalent to those arising from a $(3,2,1)$ design.
\end{remark}
\section{Conclusions}
\label{sec:conclusions}
This paper introduced the concept of vital rank-one POVMs for Quantum State Tomography: rank-one POVMs that are minimally sufficient for reconstructing pure states (PSI-Complete). We established sharp upper bounds on the size of vital rank-one POVMs: $\binom{n+1}{2}$ for real Hilbert spaces and $n^2$ for complex Hilbert spaces, and provided explicit constructions attaining these bounds. Notably, these maximal vital constructions are also Minimally Informationally Complete, and are therefore capable of reconstructing any quantum state.

Our main result is the introduction of a framework that connects combinatorial block designs to the construction of vital rank-one POVMs in the real case. In particular, we proved that $(n,w,w-1)$-designs generate large families of vital rank-one POVMs of various sizes, by leveraging the Complement Property. Our framework unifies existing examples and provides new constructions, such as in Example \ref{ex:finite-fields}. Furthermore, we showed that these design-based constructions do not cover all possible constructions of vital rank-one POVMs, thereby highlighting the richness of the space of vital POVMs.

Several directions for future research emerge from this work.

\begin{itemize}
    \item \textbf{Stability and Noise Robustness:} vital rank-one POVMs, by definition, have no redundancy. In practical QST, redundancy often enhances stability against experimental noise. It would be valuable to analyze the noise robustness of the constructed vital rank-one POVMs, for instance, by studying their condition numbers or frame potentials, and comparing them to highly stable but redundant measurements like Mutually Unbiased Bases (MUBs).
    \item \textbf{Complex Combinatorial Constructions:} The methods in Section~\ref{sec.design} rely heavily on the Complement Property, which is specific to the real case. Developing analogous methods for constructing structured vital rank-one POVMs in complex Hilbert spaces remains an interesting open question.
    \item \textbf{Classification of vital rank-one POVMs:} As shown in Section \ref{sec:equiv-vital-measurements}, the classification of vital rank-one POVMs up to $\GL_n$ equivalence is non-trivial. Further investigation into the geometric and algebraic properties characterizing these equivalence classes is needed.
\end{itemize}
\section*{Acknowledgments.} D. Edidin and I. Gonzalez were partially supported by BSF grant 2020159 and NSF grant DMS2205626 while preparing this work.  I. Tamo was supported by the European Research Council (ERC) under Grant 852953.

\bibliographystyle{alpha}
\bibliography{ref}

\end{document}